\title{Proofs for Free in the $\lambda\Pi$-Calculus Modulo Theory}
\author{Thomas Traversié
\institute{Université Paris-Saclay, CentraleSupélec, MICS \\ Gif-sur-Yvette, France}
\institute{Université Paris-Saclay, Inria, CNRS, ENS-Paris-Saclay, LMF \\ Gif-sur-Yvette, France}
\email{thomas.traversie@centralesupelec.fr}}
\newtheorem{definition}{Definition}
\newtheorem{lemma}{Lemma}
\newtheorem{proposition}{Proposition}
\newtheorem{theorem}{Theorem}
\newtheorem{example}{Example}
\def\ra{\rightarrow}
\def\lra{\hookrightarrow}
\def\T{\mathbb{T}}
\def\S{\mathbb{S}}
\def\Type{\mbox{\tt TYPE}}
\def\Kind{\mbox{\tt KIND}}
\def\Set{{\it Set}}
\def\El{{\it El}}
\def\Prf{{\it Prf}}
\def\imp{\mathbin{\Rightarrow}}
\def\impd{\mathbin{\Rightarrow_d}}
\def\arr{\mathbin{\rightsquigarrow}}
\def\arrd{\mathbin{\rightsquigarrow_d}}
\def\o{o}
\def\blpi{\pi}
\def\fa{\forall}
\def\nat{\mathsf{nat}}
\def\int{\mathsf{int}}
\def\succ{\mathsf{succ}}
\def\pred{\mathsf{pred}}
\def\rec{\mathsf{rec}}
\def\ax{\mathsf{ax}}
\def\thm{\mathsf{thm}}
\def\graph{\mathsf{graph}}
\def\node{\mathsf{node}}
\def\root{\mathsf{root}}
\def\et{\mathsf{eta}}
\def\chgr{\mathsf{cr}}
\def\ts#1{#1^*}
\def\tp#1{#1^+}
\def\tsp#1{#1^{*,+}}
\def\sth{\S} 
\def\tth{\T} 
\def\al{\textit{et al.}}
\newcommand\lampi{{\sc Lambdapi}\xspace}
\newcommand\dk{{\sc Dedukti}\xspace}
\newcommand{\lpc}{$\lambda \Pi$-calculus\xspace}
\newcommand{\lpcm}{$\lambda\Pi$-calculus modulo theory\xspace}
\begin{document}

\maketitle

\begin{abstract}
Parametricity allows the transfer of proofs between different implementations of the same data structure. The $\lambda\Pi$-calculus modulo theory is an extension of the $\lambda$-calculus with dependent types and user-defined rewrite rules. It is a logical framework, used to exchange proofs between different proof systems. We define an interpretation of theories of the $\lambda\Pi$-calculus modulo theory, inspired by parametricity. Such an interpretation allows to transfer proofs for free between theories that feature the notions of proposition and proof, when the source theory can be embedded into the target theory.
\end{abstract}

\section{Introduction}

Many proof assistants have been developed during the past decades, such as \textsc{Agda}, \textsc{Coq}, \textsc{HOL Light}, \textsc{Isabelle}, \textsc{Lean} or \textsc{Mizar}. All those systems have their own theoretical foundations and proof language. If a library of proofs has been formalized in some proof assistant, one would ideally like to export it automatically to any other proof assistant. That is why the question of the \textit{interoperability} between proof systems arises. Exchanging formal proofs between different proof systems strengthen re-usability, re-checking and preservation of libraries. For this purpose, Cousineau and Dowek developed the \lpcm~\cite{lambdapi}, that combines $\lambda$-calculus with dependent types and user-defined rewrite rules. It is a logical framework, in which theories are defined by typed constants and rewrite rules, specified by the users. Many theories can be expressed in the \lpcm~\cite{theoryU}, such as Predicate Logic, Simple Type Theory and the Calculus of Constructions. Most of all, theories from various proof assistants can be expressed in this logical framework. As a consequence, it can be used as a common framework for exchanging proofs between proof systems~\cite{thire}. The \lpcm has been implemented in the concrete language \dk~\cite{expressing,deduktiengine} and in the \lampi proof assistant, which features user-friendly proof tactics.

The problem of the exchange of proofs also emerges when it comes to the different implementations of a same data structure. One would like to share the theorems proved for one implementation to all the other implementations of the same data structure, without additional efforts. One method to derive theorems for free is to use \textit{parametricity}. Reynolds~\cite{reynolds} originally introduced an abstraction theorem, stating that the different implementations of a polymorphic function behave similarly. Wadler~\cite{wadler} used this result to derive properties satisfied by polymorphic functions, depending on their types. In other words, all functions of the same abstract type satisfy the same theorems. Bernardy \al~\cite{parametricity_conf,parametricity_journal} later extended parametricity to Pure Type Systems. Keller and Lasson~\cite{keller_lasson} investigated parametricity for the Calculus of Inductive Constructions, the language behind the \textsc{Coq} proof assistant. More recently, Cohen \al~\cite{trocq} developed a parametricity framework and implemented \textsc{Trocq}, a \textsc{Coq} plugin for proof transfer based on parametricity. The exchange of proofs---the very purpose of the \lpcm---is therefore an important application of the parametricity translations.

Transferring databases of proofs is relevant when working with related mathematical structures. For instance, if we have proved theorems in a theory of natural numbers and we want to use them in a theory of integers, we would like to export the proofs for non-negative integers. The same issue arises concerning various mathematical structures and databases of proofs, as we can \textit{embed} natural numbers into reals, reals into reals extended with infinity elements, or sets into pointed graphs~\cite{deduktiz}. It would therefore be interesting to exchange proofs between theories of the \lpcm, when the source theory can be embedded into the target theory.

\paragraph{Contribution.} In this paper, we define an interpretation of theories of the \lpcm, when they feature a prelude encoding of the notions of proposition and proof. Such an interpretation, inspired by parametricity, applies when we can embed the source theory $\sth$ into the target theory $\tth$. The interpretation depends on parameters, given by the user for representing each constant of the source theory by a term in the target theory. We provide the parameters necessary for interpreting the prelude encoding. We show that if $\sth$ has an interpretation in $\tth$, then the proofs written inside $\sth$ can be transformed into proofs written inside $\tth$. This interpretation comes with a relative consistency theorem: if $\tth$ is consistent, then $\sth$ is consistent too. 

In order to illustrate this interpretation, we embed a theory of natural numbers into a theory of integers. This example, as well as the parameters for the prelude encoding, are given in \dk, and are available at \url{https://github.com/thomastraversie/InterpDK}.

\paragraph{Outline of the paper.} In \Cref{sec_lpcm}, we give a formal presentation of the \lpcm, and we detail a prelude encoding of the notions of proposition and proof. In \Cref{sec_interp}, we define an interpretation of theories of the \lpcm. In particular, we specify the parameters required for interpreting the prelude encoding. We prove the interpretation theorem and the relative consistency theorem. At the end, we show how this interpretation can be used to derive theorems for free, taking the running example of natural numbers and integers.

\section{Theories in the \texorpdfstring{$\lambda\Pi$}{lambdaPi}-Calculus Modulo Theory}
\label{sec_lpcm}

In this section, we give a formal definition of the syntax and type system of the \lpcm. We present a standard way of expressing the notions of proposition and proof in it---called prelude encoding---and we emphasize the theories that will be considered in the rest of the paper.

\subsection{The \texorpdfstring{$\lambda\Pi$}{lambdaPi}-Calculus Modulo Theory}

The Edinburgh Logical Framework~\cite{LF}, also known as \lpc, is an extension of simply typed $\lambda$-calculus with dependent types. The \lpcm~\cite{lambdapi} is an extension of the Edinburgh Logical Framework, in which user-defined rewrite rules~\cite{rewriteSystem} have been added. Its syntax is given by:
\begin{align*}
&\text{\textit{Sorts}} &s &\Coloneqq \Type ~\mid~ \Kind \\
&\text{\textit{Terms}} &t,u, A, B &\Coloneqq c ~\mid~ x ~\mid~ s ~\mid~ \Pi (x : A). ~B ~\mid~ \lambda (x : A). ~t ~\mid~ t ~u \\
&\text{\textit{Contexts}} &\Gamma &\Coloneqq \langle \rangle ~\mid~ \Gamma, x : A \\
&\text{\textit{Signatures}} &\Sigma &\Coloneqq \langle \rangle ~\mid~ \Sigma, c : A ~\mid~ \Sigma, \ell \lra r
\end{align*}
where $c$ is a constant and $x$ is a variable (ranging over disjoint sets), $\Pi (x : A). ~B$ is a dependent product (simply written $A \ra B$ if $x$ does not occur in $B$), $\lambda (x : A). ~t$ is an abstraction, and $t ~u$ is an application. For convenience, $\lambda (x_1 : A_1). \ldots \lambda (x_n : A_n). ~t$ is written $\lambda (x_1 : A_1) \ldots (x_n : A_n). ~t$ and $\Pi (x_1 : A_1). \ldots \Pi (x_n : A_n). ~B$ is written $\Pi (x_1 : A_1) \ldots (x_n : A_n). ~B$. Terms of type $\Type$ are called types, and terms of type $\Kind$ are called kinds. Signatures and contexts are finite sequences, and are written $\langle \rangle$ when empty. The \lpcm is a logical framework, in which $\Sigma$ is fixed by the users depending on the theory they are working in. Signatures are composed of typed constants $c : A$ (such that $A$ is a closed term, that is a term with no free variables) and rewrite rules $\ell \lra r$ (such that the head-symbol of $\ell$ is a constant). The relation $\lra_{\beta\Sigma}$ is the smallest relation, closed by context, such that if $t$ rewrites to $u$ for some rule in $\Sigma$ or by $\beta$-reduction, then $t \lra_{\beta\Sigma} u$. The conversion $\equiv_{\beta\Sigma}$ is the reflexive, symmetric, and transitive closure of the relation $\lra_{\beta\Sigma}$.

\begin{figure}[ht]
\begin{mathpar}
\inferrule*[right={[Empty]}]{ }{\vdash \langle \rangle}

\inferrule*[right={[Decl] $x \notin \Gamma$}]{\vdash \Gamma \\ \Gamma \vdash A : \Type}{\vdash \Gamma, x : A}

\inferrule*[right={[Sort]}]{\vdash \Gamma}{\Gamma \vdash \Type : \Kind}

\inferrule*[right={[Const] $c : A \in \Sigma$}]{\vdash \Gamma \\ \vdash A : s}{\Gamma \vdash c : A}

\inferrule*[right={[Var] $x : A \in \Gamma$}]{\vdash \Gamma}{\Gamma \vdash x : A}

\inferrule*[right={[Prod]}]{\Gamma \vdash A : \Type \\ \Gamma, x : A \vdash B : s}{\Gamma \vdash \Pi (x : A). ~B : s}

\inferrule*[right={[Abs]}]{\Gamma \vdash A : \Type \\ \Gamma, x : A \vdash B : s \\ \Gamma, x : A \vdash t : B}{\Gamma \vdash \lambda (x : A). ~t : \Pi (x : A). ~B}

\inferrule*[right={[App]}]{\Gamma \vdash t : \Pi (x : A). ~B \\ \Gamma \vdash u : A}{\Gamma \vdash t\ u : B[x \leftarrow u]}

\inferrule*[right={[Conv] $A \equiv_{\beta\Sigma} B$}]{\Gamma \vdash t : A \\ \vdash B : s}{\Gamma \vdash t : B}
\end{mathpar}
\caption{Typing rules of the \lpcm.}
\label{typrules_lpcm}
\end{figure}

The judgment $\vdash \Gamma$ means that the context $\Gamma$ is well-formed, and $\Gamma \vdash t : A$ means that $t$ is of type $A$ in the context $\Gamma$. When the context is empty, we simply write $\vdash t : A$. The typing rules for the \lpcm are given in \Cref{typrules_lpcm}. The standard weakening rule is admissible. 

A signature is a theory when its rewrite rules satisfy certain properties. We write $\Lambda_{\Sigma}$ for the set of terms whose constants belong to $\Sigma$. 

\begin{definition}[Theory]
A theory $\T$ in the \lpcm is given by a signature $\Sigma$ such that:
\begin{enumerate}
\item for each rule $\ell \lra r \in \Sigma$, we have $\ell$ and $r$ in $\Lambda_{\Sigma}$,
\item $\lra_{\beta\Sigma}$ is confluent on $\Lambda_{\Sigma}$,
\item for each rule $\ell \lra r \in \Sigma$, for all context $\Gamma$, term $A \in \Lambda_{\Sigma}$ and substitution $\theta$, if $\Gamma \vdash \ell\theta : A$ then $\Gamma \vdash r\theta : A$.
\end{enumerate}
\end{definition}

\begin{lemma}
If $\Gamma \vdash t : A$, then either $A = \Kind$ or $\Gamma \vdash A : s$ for $s = \Type$ or $s = \Kind$. If $\Gamma \vdash \Pi (x : A). ~B : s$, then $\Gamma \vdash A : \Type$.
\end{lemma}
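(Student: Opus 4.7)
The plan is to prove the two parts together, with the first by induction on the derivation of $\Gamma \vdash t : A$ (case analysis on the last rule), and the second by inversion (an easy induction to handle [Conv]) on the derivation of $\Gamma \vdash \Pi(x:A).B : s$. The routine cases of the first part go as follows. For [Sort], $A = \Kind$ directly. For [Const], the premise $\vdash A : s$ combined with weakening yields $\Gamma \vdash A : s$. For [Var], since $\vdash \Gamma$ is well-formed and $x : A \in \Gamma$, the [Decl] rule used to introduce $x$ gives $\Gamma' \vdash A : \Type$ for a prefix $\Gamma'$ of $\Gamma$, and weakening gives $\Gamma \vdash A : \Type$. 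For [Prod], $A$ is a sort $s \in \{\Type, \Kind\}$: if $s = \Kind$ then we are in the first disjunct, and if $s = \Type$ then $\Gamma \vdash \Type : \Kind$ by [Sort]. For [Abs], the premises provide $\Gamma \vdash A : \Type$ and $\Gamma, x:A \vdash B : s$, so [Prod] gives $\Gamma \vdash \Pi(x:A).B : s$. For [Conv], the conclusion type is precisely the $B$ typed by $s$ in the second premise.

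The delicate case is [App]: from $\Gamma \vdash t : \Pi(x:A).B$ and $\Gamma \vdash u : A$, we must show that $B[x \leftarrow u]$ is itself typable by a sort (it cannot equal $\Kind$ because it is a substitution instance of a term $B$ that lives under a binder). By the induction hypothesis on the first premise, $\Gamma \vdash \Pi(x:A).B : s$ for some sort $s$. To extract $\Gamma, x:A \vdash B : s$, I invoke the second part of the lemma, or more precisely its natural strengthening (a generation lemma for $\Pi$-types), then conclude via the standard substitution lemma that $\Gamma \vdash B[x \leftarrow u] : s$.

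For the second part, proceed by induction on the derivation of $\Gamma \vdash \Pi(x:A).B : s$. Only [Prod] and [Conv] can have a $\Pi$-term as subject. The [Prod] case gives $\Gamma \vdash A : \Type$ directly as a premise. The [Conv] case provides $\Gamma \vdash \Pi(x:A).B : s'$ with $s' \equiv_{\beta\Sigma} s$, and the induction hypothesis applies immediately to this smaller derivation.

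The main obstacle is the [App] case: it is the only place where we use nontrivial auxiliary machinery, namely the substitution lemma and the generation/inversion principle for $\Pi$-types. Both are standard results for the \lpcm, admissible by straightforward inductions on typing derivations, but they must be either stated as auxiliary lemmas or else folded into a single joint induction alongside the present statement to avoid circularity.
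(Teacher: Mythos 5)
The paper states this lemma without proof, treating it as a standard well-formedness/inversion fact about the \lpcm, and your argument is essentially the standard one: induction on the typing derivation for the first part, with weakening (which the paper asserts admissible), a substitution lemma and a generation principle for $\Pi$-types as auxiliary ingredients, and inversion for the second part. Two small points deserve attention. First, in the \textsc{Conv} case of the second part your induction hypothesis does not apply ``immediately'': the premise of \textsc{Conv} is $\Gamma \vdash \Pi(x:A).\,B : C$ where $C$ is merely convertible to the sort $s$, not syntactically a sort, so the statement as you phrased it (type position restricted to sorts) is not directly applicable to the smaller derivation. The standard fix is to prove the stronger generation statement ``if $\Gamma \vdash \Pi(x:A).\,B : C$ for any term $C$, then $\Gamma \vdash A : \Type$ (and $\Gamma, x:A \vdash B : s'$ for some sort $s'$)'', which is exactly the strengthening you already invoke in the \textsc{App} case; proving that version once removes the problem and also supplies what \textsc{App} needs, with no circularity since it and the substitution lemma are independent of the present lemma. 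Second, in the \textsc{App} case the parenthetical worry that $B[x \leftarrow u]$ might equal $\Kind$ is moot: the first disjunct of the statement tolerates $\Kind$, and in any case generation plus the substitution lemma give $\Gamma \vdash B[x \leftarrow u] : s'$ directly (also note that in \textsc{Conv} the second premise types $B$ in the empty context, so an explicit appeal to weakening is needed there, as in your \textsc{Const} case). With these adjustments your proof is correct and matches the argument the paper leaves implicit.
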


\subsection{A Prelude Encoding}

It is possible to formalize the notions of proposition and proof in the \lpcm~\cite{theoryU}. In particular, this encoding---called prelude encoding---gives the possibility to quantify over certain propositions through codes, which is not possible inside the standard \lpcm. This encoding is defined by the following signature, written $\Sigma_{pre}$.
\begin{flalign*}
&\Set : \Type  & &\o : \Set \\
&\El : \Set \ra \Type & &\Prf : \El ~\o \ra \Type  \\
&\arrd : \Pi (x : \Set). ~(\El ~x \ra \Set) \ra \Set & &\impd : \Pi (x : \El ~\o). ~(\Prf ~x \ra \El ~\o) \ra \El ~\o \\
&\El ~(x \arrd y) \lra \Pi (z : \El ~x). ~\El ~(y ~z) & &\Prf ~(x \impd y) \lra \Pi (z : \Prf ~x). ~\Prf ~(y ~z) \\
&\blpi : \Pi (x : \El ~\o). ~(\Prf ~x \ra \Set) \ra \Set & &\fa : \Pi (x : \Set). ~(\El ~x \ra \El ~\o) \ra \El ~\o \\
&\El ~(\blpi ~x ~y) \lra \Pi (z : \Prf ~x). ~\El ~(y ~z) & &\Prf ~(\fa ~x ~y) \lra \Pi (z : \El ~x). ~\Prf ~(y ~z)
\end{flalign*}
We declare the constant $\Set$, which represents the universe of sorts, along with the injection $\El$ that maps terms of type $\Set$ to the type of its elements. We define a sort $\o$, such that $\El ~\o$ corresponds to the universe of propositions. The injection $\Prf$ maps propositions to the type of its proof. In other words, a term $P$ of type $\El ~\o$ is a proposition, and a term of type $\Prf ~P$ is a proof of $P$. The infix symbol $\arrd$ (respectively $\impd$) is used to represent dependent function types between terms of type $\Set$ (respectively $\El ~\o$). Remark that the symbols $\arrd$ and $\impd$ are generalizations of the usual functionality $\arr$ and implication $\imp$ in the case of dependent types. The symbol $\blpi$ (respectively $\fa$) is used to represent dependent function types between elements of type $\El ~\o$ and $\Set$ (respectively $\Set$ and $\El ~\o$).

While it is not possible to quantify over $\Type$ in the \lpcm, this encoding allows to quantify over propositions---objects of type $\El ~\o$---and then inject them into $\Type$ using $\Prf$. Similarly, we can quantify over sorts---objects of type $\Set$---and then inject them into $\Type$ using $\El$.

\subsection{Theories with Prelude Encoding}

In this paper, we consider theories that feature those basic notions of proposition and proof. More formally, we take theories of the form $\T = \Sigma_{pre} \cup \Sigma_{\T}$, where the user-defined constants $c : A \in \Sigma_{\T}$ have to be expressed in the prelude encoding.

\begin{definition}[Theories with prelude encoding]
We say that a theory $\T = \Sigma_{pre} \cup \Sigma_\T$ is a theory with prelude encoding when for every $c : A \in \Sigma_\T$, we have $\vdash A : \Type$.
\end{definition}
The condition guarantees that the user-defined constants of $\Sigma_{\T}$ are indeed encoded in the prelude encoding. For instance, we cannot define $\nat : \Type$, but are forced to take $\nat : \Set$. Consequently inside a theory with prelude encoding, the only constants $c : A \in \Sigma$ with $A$ a kind are $\Set$ (of type $\Type$), $\El$ (of type $\Set \ra \Type$) and $\Prf$ (of type $\El ~\o \ra \Type$). 

For each rewrite rule $\ell \lra r \in \Sigma$, the head-symbol of $\ell$ is a constant. It follows that, if $\Gamma \vdash \ell : A$, then $A$ cannot be $\Kind$. We thus have $\Gamma \vdash A : s$ with $s = \Type$ or $s = \Kind$. In particular, $\Type$ cannot occur in $\ell$ and $r$.

\begin{example}[Natural numbers]
We define a theory with prelude encoding $\T_n = \Sigma_{pre} \cup \Sigma_n$ for natural numbers. $\nat$ is the sort of natural numbers. We declare two constructors $0_n$ and $\succ_n$, a relation $\geq_n$, and an induction principle $\rec_n$.
\begin{flalign*}
&\begin{array}{ll}
\nat : &\Set \\
0_n : &\El ~\nat \\
\succ_n : &\El ~\nat \ra \El ~\nat \\
\geq_n ~: &\El ~\nat \ra \El ~\nat \ra \El ~\o \\
\ax^1_n : &\Pi (x : \El ~\nat). ~\Prf ~(x \geq_n x) \\
\ax^2_n : &\Pi (x : \El ~\nat). ~\Prf ~(\succ_n ~x \geq_n x) \\
\ax^3_n : &\Pi (x,y,z : \El ~\nat). ~\Prf ~(x \geq_n y) \ra \Prf ~(y \geq_n z) \ra \Prf ~(x \geq_n z) \\
\rec_n : &\Pi (P : \El ~\nat \ra \El ~\o). ~\Prf ~(P ~0_n) \ra \\
 &[\Pi (x : \El ~\nat). ~\Prf ~(P ~x) \ra \Prf ~(P ~(\succ_n ~x))] \ra \\
 &\Pi (x : \El ~\nat). ~\Prf ~(P ~x) 
\end{array} &
\end{flalign*}
In this theory, we can prove $\Pi (x : \El ~\nat). ~\Prf ~(x \geq_n 0_n)$ and $\Pi (x : \El ~\nat). ~\Prf ~(\succ_n ~x \geq_n 0_n)$.
\end{example}

\begin{example}[Integers]
We define a theory with prelude encoding $\T_i = \Sigma_{pre} \cup \Sigma_i$ for integers. $\int$ is the sort of integers. We declare three constructors $0_i$, $\succ_i$ and $\pred_i$, a relation $\geq_i$ and a generalized induction principle $\rec_i$.
\begin{flalign*}
&\begin{array}{ll}
\int : &\Set \\
0_i : &\El ~\int \\
\succ_i : &\El ~\int \ra \El ~\int \\
\pred_i : &\El ~\int \ra \El ~\int \\
\geq_i ~: &\El ~\int \ra \El ~\int \ra \El ~\o \\
\ax^1_i : &\Pi (x : \El ~\int). ~\Prf ~(x \geq_i x) \\
\ax^2_i : &\Pi (x : \El ~\int). ~\Prf ~(\succ_i ~x \geq_i x) \\
\ax^3_i : &\Pi (x,y,z : \El ~\int). ~\Prf ~(x \geq_i y) \ra \Prf ~(y \geq_i z) \ra \Prf ~(x \geq_i z) \\
\ax^4_i : &\Pi (x : \El ~\int). ~\Pi (P : \El ~\int \ra \El ~\o). ~\Prf ~(P ~(\succ_i ~(\pred_i ~x))) \ra \Prf ~(P ~x) \\
\ax^5_i : &\Pi (x : \El ~\int). ~\Pi (P : \El ~\int \ra \El ~\o). ~\Prf ~(P ~(\pred_i ~(\succ_i ~x))) \ra \Prf ~(P ~x) \\
\rec_i : &\Pi (c : \El ~\int)(P : \El ~\int \ra \El ~\o). ~\Prf ~(P ~c) \ra \\
&[\Pi (x : \El ~\int). ~\Prf ~(x \geq_i c) \ra \Prf ~(P ~x) \ra \Prf ~(P ~(\succ_i ~x))] \ra \\
&\Pi (x : \El ~\int). ~\Prf ~(x \geq_i c) \ra \Prf ~(P ~x) 
\end{array} &
\end{flalign*}
In this theory, we cannot prove $\Pi (x : \El ~\int). ~\Prf ~(x \geq_i 0_i)$ and $\Pi (x : \El ~\int). ~\Prf ~(\succ_i ~x \geq_i 0_i)$, but we can prove $\Pi (x : \El ~\int). ~\Prf ~(x \geq_i 0_i) \ra \Prf ~(\succ_i ~x \geq_i 0_i)$.
\end{example}

\section{Interpretation in the \texorpdfstring{$\lambda\Pi$}{lambdaPi}-Calculus Modulo Theory}
\label{sec_interp}

In this section, we define an interpretation of theories with prelude encoding. To do so, we first define the interpretation for the terms of the \lpcm, and then we extend it to theories with prelude encoding. Such an interpretation requires external parameters. In particular, we provide the parameters necessary for interpreting the prelude encoding. We show how the interpretation of a source theory $\sth$ in a target theory $\tth$ can be used to derive in $\tth$ the theorems proved in $\sth$. We conclude with an example: we provide the formal parameters for interpreting the theory of natural numbers $\T_n$ in the theory of integers $\T_i$.

\subsection{Interpretation of Terms}

\paragraph{Intuition.} When we interpret the source theory $\sth$ in the target theory $\tth$, we want to represent every term $t$ of $\sth$ by a term $\ts{t}$ in $\tth$, such that if $t$ is of type $A$ in $\sth$ then $\ts{t}$ is of type $\ts{A}$ in $\tth$. For instance, when interpreting the theory of natural numbers $\T_n$ in the theory of integers $\T_i$, we have to represent $\El ~\nat$ by $\ts{(\El ~\nat)}$. We would like to take $\ts{(\El ~\nat)} \coloneqq \Sigma (z : \El ~\int). ~\Prf ~(z \geq_i 0_i)$. However, the \lpcm does not feature $\Sigma$-types, and it is therefore difficult to express $\ts{(\El ~\nat)}$ in $\T_i$.

An alternative is to interpret the type of natural numbers $\El ~\nat$ by the type of integers $\El ~\int$, but we must guarantee that every integer representing a natural number is indeed non-negative. We naturally interpret the sort $\nat$ by $\int$, $0_n$ by $0_i$, $\succ_n$ by $\succ_i$, and $\geq_n$ by $\geq_i$. The interpretation of the theorem $\Pi (x : \El ~\nat). ~\Prf ~(\succ_n ~x \geq_n 0_n)$ should not be $\Pi (\ts{x} : \El ~\int). ~\Prf ~(\succ_i ~\ts{x} \geq_i 0_i)$, which is generally false for integers. Instead, we must ensure that $\ts{x}$ is an integer corresponding to a natural number, meaning that we suppose a proof of $\Prf ~(\ts{x} \geq_i 0_i)$. Thus the interpretation of the theorem $\Pi (x : \El ~\nat). ~\Prf ~(\succ_n ~x \geq_n 0_n)$ should be $\Pi (\ts{x} : \El ~\int). ~\Prf ~(\ts{x} \geq_i 0_i) \ra \Prf ~(\succ_i ~\ts{x} \geq_i 0_i)$.

\paragraph{Formal definition.} Following this intuition, when interpreting a term $t$ of type $A$ in $\sth$ by a term $\ts{t}$ of type $\ts{A}$ in $\tth$, we must take into account that $\ts{A}$ is a type that encompasses $A$, but may be larger than $A$. In that respect, we introduce another term $\tp{t}$ of type $\tp{A} ~\ts{t}$, where $\tp{A}$ is a predicate asserting that a given object of type $\ts{A}$ satisfies the semantic of type $A$. 

The interpretation of every constant $c$ is given by two parameters $\ts{c}$ and $\tp{c}$. The translation of an application $\ts{(t ~u)}$ is $\ts{t} ~\ts{u} ~\tp{u}$, since $\ts{t}$ takes as arguments $\ts{u}$ but also the witness $\tp{u}$. Similarly, $\tp{(t ~u)}$ is given by $\tp{t} ~\ts{u} ~\tp{u}$. If the variable $x$ occurs in $t$, then $\ts{x}$ and $\tp{x}$ may occur in $\ts{t}$ and $\tp{t}$. Hence $\ts{(\lambda (x : A). ~t)}$ is given by $\lambda (\ts{x} : \ts{A})(\tp{x} : \tp{A} ~\ts{x}). ~\ts{t}$ and $\tp{(\lambda (x : A). ~t)}$ is given by $\lambda (\ts{x} : \ts{A})(\tp{x} : \tp{A} ~\ts{x}). ~\tp{t}$. 

The same intuition holds for dependent types $\ts{(\Pi (x : A). ~B)}$. The predicate $\tp{(\Pi (x : A). ~B)}$ asserts that an object $f$ of type $\ts{(\Pi (x : A). ~B)}$ corresponds to the semantic of $\Pi (x : A). ~B$. In other words, for every $\ts{x}$ of type $\ts{A}$ and $\tp{x}$ of type $\tp{A} ~\ts{x}$, the term $f ~\ts{x} ~\tp{x}$ should satisfy the predicate $\tp{B}$. When $B$ is of type $\Type$, we take $\tp{(\Pi (x : A). ~B)} \coloneqq \lambda (f : \ts{(\Pi (x : A). ~B)}). ~\Pi (\ts{x} : \ts{A})(\tp{x} : \tp{A} ~\ts{x}). ~\tp{B} ~(f ~\ts{x} ~\tp{x})$. However, we cannot do the same when $B$ is of type $\Kind$, because this term would be ill-typed. Indeed, $\ts{(\Pi (x : A). ~B)}$ has type $\Kind$, while the type of the bound variable $f$ must have type $\Type$. To get around this issue, we introduce metavariables. We write $T\{ X \}$ when the metavariable $X$ occurs in $T$, and we write $T\{ t \}$ for the term obtained when substituting $X$ by $t$ in $T$. When $B$ has type $\Kind$, we take $\tp{(\Pi (x : A). ~B)}\{X\} \coloneqq \Pi (\ts{x} : \ts{A})(\tp{x} : \tp{A} ~\ts{x}). ~\tp{B}\{X ~\ts{x} ~\tp{x}\}$. Metavariables are only used for this purpose. In particular, they are always substituted and they never appear in typed terms.

\begin{definition}[Interpretation of terms]
The interpretation of terms of the \lpcm is given by the function $t \mapsto \ts{t}$ defined inductively by
\begin{flalign*}
&\begin{array}{l}
	\ts{(x)} \coloneqq \ts{x} \text{ (variable)} \\
	\ts{(c)} \coloneqq \ts{c} \text{ (parameter)} \\
	\ts{\Type} \coloneqq \Type \\
	\ts{\Kind} \coloneqq \Kind \\
	\ts{(t ~u)} \coloneqq \ts{t} ~\ts{u} ~\tp{u} \\
	\ts{(\lambda (x : A). ~t)} \coloneqq \lambda (\ts{x} : \ts{A})(\tp{x} : \tp{A} ~\ts{x}). ~\ts{t} \\
	\ts{(\Pi (x : A). ~B)} \coloneqq \Pi (\ts{x} : \ts{A})(\tp{x} : \tp{A} ~\ts{x}). ~\ts{B} 
\end{array} &
\end{flalign*}
and by the function $t \mapsto \tp{t}$ defined inductively by
\begin{flalign*}
&\begin{array}{l}
	\tp{(x)} \coloneqq \tp{x} \text{ (variable)} \\
	\tp{(c)} \coloneqq \tp{c} \text{ (parameter)} \\
	\tp{\Type}\{X\} \coloneqq X \ra \Type \\
	\tp{\Kind} \coloneqq \Kind \\
	\tp{(t ~u)} \coloneqq \tp{t} ~\ts{u} ~\tp{u} \\
	\tp{(\lambda (x : A). ~t)} \coloneqq \lambda (\ts{x} : \ts{A})(\tp{x} : \tp{A} ~\ts{x}). ~\tp{t} \\
	\tp{(\Pi (x : A). ~B)} \coloneqq \lambda (f : \ts{(\Pi (x : A). ~B)}). ~\Pi (\ts{x} : \ts{A})(\tp{x} : \tp{A} ~\ts{x}). ~\tp{B} ~(f ~\ts{x} ~\tp{x}) \text{ if $B : \Type$}  \\
    \tp{(\Pi (x : A). ~B)}\{X\} \coloneqq \Pi (\ts{x} : \ts{A})(\tp{x} : \tp{A} ~\ts{x}). ~\tp{B}\{X ~\ts{x} ~\tp{x}\} \text{ if $B : \Kind$.}
\end{array} &
\end{flalign*}
where the $X$ is a metavariable. The interpretation is extended to contexts with 
\begin{flalign*}
&\begin{array}{l}
	\tsp{\langle \rangle} \coloneqq \langle \rangle \\
	\tsp{(\Gamma, x : A)} \coloneqq \tsp{\Gamma}, \ts{x} : \ts{A}, \tp{x} : \tp{A} ~\ts{x}.
\end{array} &
\end{flalign*}
\end{definition}

When the free variable $x$ occurs in $t$, then $\ts{x}$ and $\tp{x}$ may both occur in $\ts{t}$ and $\tp{t}$. As such, we do not define distinct translations $\ts{\Gamma}$ and $\tp{\Gamma}$, but a single translation $\tsp{\Gamma}$, such that if $(x : A) \in \Gamma$ then $(\ts{x} : \ts{A}) \in \tsp{\Gamma}$ and $(\tp{x} : \tp{A} ~\ts{x}) \in \tsp{\Gamma}$.

\paragraph{Parametricity.} Remark that our interpretation is intuitively related to the parametricity translation~\cite{parametricity_conf}. Using parametricity, the translation $\ts{(t ~u)}$ is given by $\ts{t} ~\ts{u}$, the translation $\ts{(\lambda (x : A). ~t)}$ is given by $\lambda (\ts{x} : \ts{A}). ~\ts{t}$, and the translation $\ts{(\Pi (x : A). ~B)}$ is given by $\Pi (\ts{x} : \ts{A}). ~\ts{B}$. In our interpretation, we focus on embeddings and we want to represent every type $A$ of the source theory by a type $\ts{A}$ of the target theory. While $\Sigma$-types are well-suited for expressing such $\ts{A}$, they are not defined in the \lpcm. That is why we have applied a \textit{currying} operation on $\Sigma$-types. We therefore represent type $A$ using a more general type $\ts{A}$, and we guarantee that each term of type $\ts{A}$ representing a term of type $A$ enjoys the predicate $\tp{A}$. Consequently, the translation $\ts{(\Pi (x : A). ~B)}$ is given by $\Pi (\ts{x} : \ts{A})(\tp{x} : \tp{A} ~\ts{x}). ~\ts{B}$, the translation $\ts{(\lambda (x : A). ~t)}$ is given by $\lambda (\ts{x} : \ts{A})(\tp{x} : \tp{A} ~\ts{x}). ~\ts{t}$, and the translation $\ts{(t ~u)}$ is given by $\ts{t} ~\ts{u} ~\tp{u}$. The formal relation between the parametricity translation and our interpretation remains to be investigated.

\subsection{Parameters for the Prelude Encoding}

We aim at interpreting a source theory $\sth$ in a target theory $\tth$, when $\sth$ and $\tth$ are theories with prelude encoding. Such an interpretation is parametrized by the terms of $\tth$ that correspond to the constants of $\sth$. In particular, we have to provide the parameters for the constants of the prelude encoding.

When $\vdash t : A$ in $\sth$, we want to have $\vdash \ts{t} : \ts{A}$ in $\tth$. Moreover, we want $\vdash \tp{A} : \ts{A} \ra \Type$ in $\tth$ when $A = \Type$. These conditions lead to the definition of $\ts{\Set}$, $\tp{\Set}$, $\ts{\El}$, $\tp{\El}$, $\ts{\Prf}$, $\tp{\Prf}$ and $\ts{\o}$. When $t$ is of type $\Prf ~p$, we need a witness $\tp{t}$ of type $\tp{(\Prf ~p)} ~\ts{t}$ asserting that $\ts{t}$ is indeed a proof of $\ts{p}$. Since $\ts{t}$ is of type $\Prf ~\ts{p}$, it is necessarily a proof of $\ts{p}$, and we define $\tp{\Prf}$ so that we can always choose $\tp{t}$ to be $\ts{t}$. The predicate $\tp{\o}$ asserts that an object $\ts{p}$ of type $\El ~\o$ is indeed a proposition, so we choose $\tp{\o}$ to be $\lambda (z : \El ~\o). ~z \impd (\lambda (x : \Prf ~z). ~z)$. Consequently, it is is always possible to find a witness $\tp{p}$ of type $\Prf ~(\tp{\o} ~\ts{p})$, that is $\Prf ~\ts{p} \ra \Prf ~\ts{p}$.
\begin{flalign*}
&\begin{array}{l}
\ts{\Set} \coloneqq \Set \\
\tp{\Set} \coloneqq \lambda (x : \Set). ~\El ~x \ra \El ~\o \\
\ts{\o} \coloneqq o \\
\tp{\o} \coloneqq \lambda (z : \El ~\o). ~z \impd (\lambda (x : \Prf ~z). ~z) \\
\ts{\El} \coloneqq \lambda (\ts{x} : \Set)(\tp{x} : \El ~\ts{x} \ra \El ~\o). ~\El ~\ts{x} \\
\tp{\El} \coloneqq \lambda (\ts{u} : \Set)(\tp{u} : \El ~\ts{u} \ra \El ~\o)(x : \El ~\ts{u}). ~\Prf ~(\tp{u} ~x) \\
\ts{\Prf} \coloneqq \lambda (\ts{x} : \El ~\o)(\tp{x} : \Prf ~(\tp{\o} ~\ts{x})). ~\Prf ~\ts{x} \\
\tp{\Prf} \coloneqq \lambda (\ts{u} : \El~\o)(\tp{u} : \Prf ~(\tp{\o} ~\ts{u}))(x : \Prf ~\ts{u}). ~\Prf ~\ts{u} 
\end{array} &
\end{flalign*}
Parameters $\ts{\arrd}$ and $\tp{\arrd}$ are defined so that $(\El ~(a \arrd b))^@ \equiv_{\beta\Sigma} (\Pi (x : \El ~a). ~\El ~(b ~x))^@$ for $@ \in \{ *, + \}$.
\begin{flalign*}
&\begin{array}{ll}
\ts{\arrd} \coloneqq &\lambda (\ts{a} : \Set)(\tp{a} : \El ~\ts{a} \ra \El ~\o)(\ts{b} : \Pi (\ts{x} : \El ~\ts{a}). ~\Prf ~(\tp{a} ~\ts{x}) \ra \Set). \\
&\lambda (\tp{b} : \Pi (\ts{x} : \El ~\ts{a})(\tp{x} : \Prf ~(\tp{a} ~\ts{x})). ~\El ~(\ts{b} ~\ts{x} ~\tp{x}) \ra \El ~\o). \\
&\ts{a} \arrd (\lambda (\ts{x} : \El ~\ts{a}). ~\blpi ~(\tp{a} ~\ts{x}) ~(\ts{b} ~\ts{x})) \\
\end{array} &\\
&\begin{array}{ll}
\tp{\arrd} \coloneqq &\lambda (\ts{a} : \Set)(\tp{a} : \El ~\ts{a} \ra \El ~\o)(\ts{b} : \Pi (\ts{x} : \El ~\ts{a}). ~\Prf ~(\tp{a} ~\ts{x}) \ra \Set). \\
&\lambda (\tp{b} : \Pi (\ts{x} : \El ~\ts{a})(\tp{x} : \Prf ~(\tp{a} ~\ts{x})). ~\El ~(\ts{b} ~\ts{x} ~\tp{x}) \ra \El ~\o). \\
&\lambda (f : \El ~\ts{(a \arrd b)}). \\
&\fa ~\ts{a} ~(\lambda (\ts{x} : \El ~\ts{a}). ~(\tp{a} ~\ts{x}) \impd (\lambda (\tp{x} : \Prf ~(\tp{a} ~\ts{x})). ~\tp{b} ~\ts{x} ~\tp{x} ~(f ~\ts{x} ~\tp{x}))) \\
\end{array} &
\end{flalign*}
Parameter $\ts{\impd}$ is defined so that $\ts{(\Prf ~(a \impd b))} \equiv_{\beta\Sigma} \ts{(\Pi (x : \Prf ~a). ~\Prf ~(b ~x))}$. Because the condition $\tp{(\Prf ~(a \impd b))} \equiv_{\beta\Sigma} \tp{(\Pi (x : \Prf ~a). ~\Prf ~(b ~x))}$ holds regardless of the definition of $\tp{\impd}$, we choose $\tp{\impd}$ so that $\vdash \tp{\impd} : \tp{(\Pi (a : \El ~\o). ~(\Prf ~a \ra \El ~\o) \ra \El ~\o)} ~\ts{\impd}$. 
\begin{flalign*}
&\begin{array}{ll}
\ts{\impd} \coloneqq &\lambda (\ts{a} : \El ~\o)(\tp{a} : \Prf ~(\tp{\o} ~\ts{a}))(\ts{b} : \Pi (\ts{x} : \Prf ~\ts{a}). ~\Prf ~\ts{a} \ra \El ~\o). \\
&\lambda (\tp{b} : \Pi (\ts{x} : \Prf ~\ts{a})(\tp{x} : \Prf ~\ts{a}). ~\Prf ~(\tp{\o} ~(\ts{b} ~\ts{x} ~\tp{x}))). \\
&\ts{a} \impd (\lambda (\ts{x} : \Prf ~\ts{a}). ~\ts{a} \impd ~(\ts{b} ~\ts{x})) \\
\end{array} \\
&\begin{array}{ll}
\tp{\impd} \coloneqq &\lambda (\ts{a} : \El ~\o)(\tp{a} : \Prf ~(\tp{\o} ~\ts{a}))(\ts{b} : \Pi (\ts{x} : \Prf ~\ts{a}). ~\Prf ~\ts{a} \ra \El ~\o). \\
&\lambda (\tp{b} : \Pi (\ts{x} : \Prf ~\ts{a})(\tp{x} : \Prf ~\ts{a}). ~\Prf ~(\tp{\o} ~(\ts{b} ~\ts{x} ~\tp{x}))). \\
&\lambda (p : \Prf ~\ts{(a \impd b)}). ~p \\
\end{array} &
\end{flalign*}
Parameters $\ts{\blpi}$ and $\tp{\blpi}$ are defined so that $(\El ~(\blpi ~a ~b))^@ \equiv_{\beta\Sigma} (\Pi (x : \Prf ~a). ~\El ~(b ~x))^@$ for $@ \in \{ *, + \}$. 
\begin{flalign*}
&\begin{array}{ll}
\ts{\blpi} \coloneqq &\lambda (\ts{a} : \El ~\o)(\tp{a} : \Prf ~(\tp{\o} ~\ts{a}))(\ts{b} : \Pi (\ts{x} : \Prf ~\ts{a}). ~\Prf ~\ts{a} \ra \Set). \\
&\lambda (\tp{b} : \Pi (\ts{x} : \Prf ~\ts{a})(\tp{x} : \Prf ~\ts{a}). ~\El ~(\ts{b} ~\ts{x} ~\tp{x}) \ra \El ~\o). \\
&\blpi ~\ts{a} ~(\lambda (\ts{x} : \Prf ~\ts{a}). ~\blpi ~\ts{a} ~(\ts{b} ~\ts{x})) \\
\end{array} \\
&\begin{array}{ll}
\tp{\blpi} \coloneqq &\lambda (\ts{a} : \El ~\o)(\tp{a} : \Prf ~(\tp{\o} ~\ts{a}))(\ts{b} : \Pi (\ts{x} : \Prf ~\ts{a}). ~\Prf ~\ts{a} \ra \Set). \\
&\lambda (\tp{b} : \Pi (\ts{x} : \Prf ~\ts{a})(\tp{x} : \Prf ~\ts{a}). ~\El ~(\ts{b} ~\ts{x} ~\tp{x}) \ra \El ~\o). \\
&\lambda (f : \El ~\ts{(\blpi ~a ~b)}). \\
&\ts{a} \impd ~(\lambda (\ts{x} : \Prf ~\ts{a}). ~\ts{a} \impd (\lambda (\tp{x} : \Prf ~\ts{a}). ~\tp{b} ~\ts{x} ~\tp{x} ~(f ~\ts{x} ~\tp{x}))) \\
\end{array} &
\end{flalign*}
Parameter $\ts{\fa}$ is defined so that $\ts{(\Prf ~(\fa ~a ~b))} \equiv_{\beta\Sigma} \ts{(\Pi (x : \El ~a). ~\Prf ~(b ~x))}$. Because the condition $\tp{(\Prf ~(\fa ~a ~b))} \equiv_{\beta\Sigma} \tp{(\Pi (x : \El ~a). ~\Prf ~(b ~x))}$ holds regardless of the definition of $\tp{\fa}$, we choose $\tp{\fa}$ so that $\vdash \tp{\fa} : \tp{(\Pi (a : \Set). ~(\El ~a \ra \El ~\o) \ra \El ~\o)} ~\ts{\fa}$.
\begin{flalign*}
&\begin{array}{ll}
\ts{\fa} \coloneqq &\lambda (\ts{a} : \Set)(\tp{a} : \El ~\ts{a} \ra \El ~\o)(\ts{b} : \Pi (\ts{x} : \El ~\ts{a}). ~\Prf ~(\tp{a} ~\ts{x}) \ra \El ~\o). \\
&\lambda (\tp{b} : \Pi (\ts{x} : \El ~\ts{a})(\tp{x} : \Prf ~(\tp{a} ~\ts{x})). ~\Prf ~(\tp{\o} ~(\ts{b} ~\ts{x} ~\tp{x}))). \\
&\fa ~\ts{a} ~(\lambda (\ts{x} : \El ~\ts{a}). ~(\tp{a} ~\ts{x}) \impd (\ts{b} ~\ts{x})) \\
\end{array} \\
&\begin{array}{ll}
\tp{\fa} \coloneqq &\lambda (\ts{a} : \Set)(\tp{a} : \El ~\ts{a} \ra \El ~\o)(\ts{b} : \Pi (\ts{x} : \El ~\ts{a}). ~\Prf ~(\tp{a} ~\ts{x}) \ra \El ~\o). \\
&\lambda (\tp{b} : \Pi (\ts{x} : \El ~\ts{a})(\tp{x} : \Prf ~(\tp{a} ~\ts{x})). ~\Prf ~(\tp{\o} ~(\ts{b} ~\ts{x} ~\tp{x}))). \\
&\lambda (p : \Prf ~\ts{(\fa ~a ~b)}). ~p \\
\end{array} &
\end{flalign*}
The parameters chosen for the constants of the prelude encoding satisfy the expected properties. For any $c : A \in \Sigma_{pre}$, we have $\vdash \ts{c} : \ts{A}$ and $\vdash \tp{c} : \tp{A} ~\ts{c}$. Moreover, the interpretation respects the conversion relation, meaning that for each rewrite rule $\ell \lra r$ of $\Sigma_{pre}$, we have both $\ts{\ell} \equiv_{\beta\Sigma} \ts{r}$ and $\tp{\ell} \equiv_{\beta\Sigma} \tp{r}$.

\begin{proposition}
Let $c : A \in \Sigma_{pre}$.
\begin{enumerate}
\item We have $\vdash \ts{c} : \ts{A}$.
\item 
\begin{enumerate}
\item If $\vdash A : \Type$ then $\vdash \tp{c} : \tp{A} ~\ts{c}$.
\item If $\vdash A : \Kind$ then $\vdash \tp{c} : \tp{A}\{\ts{c}\}$.
\end{enumerate}
\end{enumerate}
\end{proposition}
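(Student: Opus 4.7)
The plan is a case-by-case verification, going through each of the eight constants of $\Sigma_{pre}$ (namely $\Set$, $\o$, $\El$, $\Prf$, $\arrd$, $\impd$, $\blpi$, $\fa$) and checking that the definitions given in the preceding parameter block have the required types. Since the definitions were designed for this purpose, no clever argument is needed; the work consists in applying the typing rules of \Cref{typrules_lpcm} and unfolding the translation from the \emph{Definition of Interpretation of Terms}.

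First I would dispatch the sort/kind-typed constants. For $\Set : \Type$, we simply observe $\ts{\Set} = \Set$ has type $\ts{\Type} = \Type$, and $\tp{\Set} = \lambda(x:\Set).~\El~x \ra \El~\o$ inhabits $\tp{\Type}\{\Set\} = \Set \ra \Type$ (using that $\El~\o$ is a type). For $\El : \Set \ra \Type$, unfolding gives $\ts{(\Set \ra \Type)} = \Pi(\ts{x}:\Set)(\tp{x}:\El~\ts{x} \ra \El~\o).~\Type$ and $\tp{(\Set \ra \Type)}\{\ts{\El}\} = \Pi(\ts{x}:\Set)(\tp{x}:\El~\ts{x}\ra\El~\o).~(\ts{\El}~\ts{x}~\tp{x} \ra \Type)$; checking that the given $\ts{\El}$ and $\tp{\El}$ have these types reduces to direct applications of [Abs], [Prod], [App] and [Var]. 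The case of $\Prf$ is analogous, with the additional step of unfolding $\tp{\o}$ to check that $\Prf~(\tp{\o}~\ts{x})$ is a well-formed type in the context at hand.

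Next I would handle $\o : \Set$, which asks to show $\vdash \ts{\o} : \ts{\Set}$ (that is $\vdash o : \Set$, immediate) and $\vdash \tp{\o} : \tp{\Set}~\ts{\o}$ (that is $\vdash \tp{\o} : \El~o \ra \El~\o$, which follows by [Abs] together with the typing rule for $\impd$ applied to $z$ and $\lambda(x:\Prf~z).~z$, making essential use of the rewrite rule $\Prf~(x \impd y) \lra \Pi(z:\Prf~x).~\Prf~(y~z)$ and the [Conv] rule). For the four binder-constants $\arrd$, $\impd$, $\blpi$, $\fa$, the method is identical but more laborious: compute the translation of their declared type (a two-level $\Pi$), strip off the leading abstractions in the candidate term, and verify that the body has the required type. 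Here I would rely crucially on [Conv] together with the prelude rewrite rules ($\El~(\cdot \arrd \cdot) \lra \Pi\ldots$, $\Prf~(\cdot \impd \cdot) \lra \Pi\ldots$, and likewise for $\blpi$ and $\fa$) to identify, e.g., $\ts{(a \arrd b)}$ with a term of the form $\El~(\ldots \arrd \ldots)$ and then with the expected $\Pi$-type.

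The only genuine subtlety, and therefore the main obstacle, is bookkeeping around the metavariable $X$ in the $\Kind$-case of $\tp{(\Pi(x:A).~B)}$: one must keep track of the substitution $X \mapsto \ts{c}~\ts{x}~\tp{x}$ (possibly iterated) when computing $\tp{A}\{\ts{c}\}$ for the curried declarations of $\El$ and $\Prf$. Once this is unfolded correctly, every remaining verification is a mechanical type-check that succeeds because the parameters were written precisely to match the shape produced by the translation. No induction is required: the statement is a finite enumeration of well-typedness facts for the eight constants of $\Sigma_{pre}$.
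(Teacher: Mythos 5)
Your proposal is correct and matches the paper's approach: the paper proves this proposition ``by simple verification,'' delegating the routine case-by-case type-checking of the eight prelude constants to a machine-checked \dk file, and your outline is exactly that verification carried out by hand (including the correct use of [Conv] with the prelude rewrite rules for the binder constants and the metavariable bookkeeping for $\El$ and $\Prf$). The only nitpick is that typing $\tp{\o}$ itself needs no conversion step --- the rewrite rule for $\impd$ only becomes essential when $\Prf~(\tp{\o}~\ts{x})$ is later consumed, e.g.\ in the $\Prf$ and $\impd$ cases.
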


\begin{proof}
By simple verification. The result has been checked in \dk, see the definitions of the parameters in the file \texttt{lo\_sp.dk}\footnote{All the \dk files are available at \url{https://github.com/thomastraversie/InterpDK}.}.
\end{proof}

\begin{proposition}
\label{prop_rules}
For every $\ell \lra r \in \Sigma_{pre}$, we have $\ts{\ell} \equiv_{\beta\Sigma} \ts{r}$ and $\tp{\ell} \equiv_{\beta\Sigma} \tp{r}$.
\end{proposition}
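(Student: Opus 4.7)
The plan is to treat each of the four rewrite rules of $\Sigma_{pre}$ separately and verify, by direct computation, that the $*$-translation and the $+$-translation of the two sides are convertible. Since each rule has the shape $\El~(F~\vec{x}) \lra R$ or $\Prf~(F~\vec{x}) \lra R$ where $F \in \{\arrd, \impd, \blpi, \fa\}$, the computation is mechanical: unfold $\ts{\cdot}$ and $\tp{\cdot}$ using the clauses for application and $\Pi$, substitute the explicit definitions of $\ts{F}, \tp{F}, \ts{\El}, \tp{\El}, \ts{\Prf}, \tp{\Prf}$ given above, perform the $\beta$-reductions that the $\lambda$-abstractions in those parameters enable, and finally apply the rewrite rules of $\Sigma_{pre}$ themselves once or twice.

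Take the first rule, $\ell = \El~(x \arrd y)$ and $r = \Pi(z:\El~x).~\El~(y~z)$, as the paradigmatic case. For the left-hand side, unfolding $\ts{\El}$ and $\ts{\arrd}$ and $\beta$-reducing yields $\ts{\ell} \equiv_{\beta\Sigma} \El~(\ts{x} \arrd (\lambda z.~\blpi~(\tp{x}~z)~(\ts{y}~z)))$, and then applying the rules $\El~(a \arrd b) \lra \Pi(z:\El~a).~\El~(b~z)$ and $\El~(\blpi~a~b) \lra \Pi(z:\Prf~a).~\El~(b~z)$ of $\Sigma_{pre}$ gives $\Pi(\ts{z}:\El~\ts{x})(\tp{z}:\Prf~(\tp{x}~\ts{z})).~\El~(\ts{y}~\ts{z}~\tp{z})$. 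For the right-hand side, unfolding the $\Pi$-clause of the interpretation together with $\ts{\El}$ and $\tp{\El}$ immediately gives the same term. The predicate side $\tp{\ell} \equiv_{\beta\Sigma} \tp{r}$ is computed analogously using $\tp{\arrd}$ and the $\lambda$-clause of $\tp{(\Pi(x:A).~B)}$ for $B:\Type$; the resulting $\fa$- and $\impd$-expressions on one side match, after the dependent-product reductions, the $\Pi$-and-$\Prf$ shape on the other side. The remaining three rules proceed identically, with the pairings $(\ts{\impd},\tp{\impd})$ for the rule on $\impd$, $(\ts{\blpi},\tp{\blpi})$ for $\blpi$, and $(\ts{\fa},\tp{\fa})$ for $\fa$; in each case a single application of a rule of $\Sigma_{pre}$ closes the gap between the unfolded $\ts{\ell}$ and $\ts{r}$.

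Nothing in the argument requires more than $\beta$-reduction and the rewrite rules already present in $\Sigma_{pre}$, and the parameters were precisely engineered (as advertised in the paragraphs introducing them) so that each such conversion goes through. The only real difficulty is bookkeeping: each unfolding produces a term with many bound variables $\ts{a},\tp{a},\ts{b},\tp{b},\ts{x},\tp{x}$, and one must be careful that the substitutions performed by $\beta$-reduction land on the right occurrences. This is exactly the kind of check that is tedious on paper but trivial for a type-checker, which is why the proof can legitimately be discharged by pointing to the \dk files \texttt{lo\_sp.dk} (and the associated test files), where each of the eight convertibilities is stated as a definitional equality and accepted by \dk.
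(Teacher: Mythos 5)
Your proposal is correct and takes essentially the same route as the paper's proof: unfold the translations and the explicit parameters, $\beta$-reduce, and close the conversion with the rewrite rules of $\Sigma_{pre}$, treating the $\arrd$ case in detail and delegating the remaining mechanical checks to the \dk file. (One minor bookkeeping slip: for the other three rules the left-hand side also needs \emph{two} prelude rewrite steps---e.g.\ the $\impd$ rule twice, or $\fa$ followed by $\impd$---just as in your own $\arrd$ computation, not a single one.)
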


\begin{proof}
We only show the case $\El ~(a \arrd b) \lra \Pi (x : \El ~a). ~\El ~(b ~x)$.
\begin{flalign*}
&\begin{array}{lll}
	\text{We have } \ts{(\El ~(a \arrd b))} &\equiv_{\beta\Sigma} &\El ~\ts{(a \arrd b)} \\
	 &\equiv_{\beta\Sigma} &\El ~(\ts{a} \arrd (\lambda \ts{x}. ~\blpi ~(\tp{a} ~\ts{x}) ~(\ts{b} ~\ts{x}))) \\
	 &\equiv_{\beta\Sigma} &\Pi (\ts{x} : \El ~\ts{a}). ~\El ~(\blpi ~(\tp{a} ~\ts{x}) ~(\ts{b} ~\ts{x})) \\
	 &\equiv_{\beta\Sigma} &\Pi (\ts{x} : \El ~\ts{a})(\tp{x} : \Prf ~(\tp{a} ~\ts{x})). ~\El ~(\ts{b} ~\ts{x} ~\tp{x}) \\
	 &\equiv_{\beta\Sigma} &\Pi (\ts{x} : \ts{(\El ~a)})(\tp{x} : \tp{(\El ~a)} ~ \ts{x}). ~\ts{(\El ~(b ~x))} \\
	 &\equiv_{\beta\Sigma} &\ts{(\Pi (x : \El ~a). ~\El ~(b ~x))}
\end{array} & \\
&\begin{array}{lll}
	\text{and } \tp{(\El ~(a \arrd b))} &\equiv_{\beta\Sigma} &\lambda (f : \El ~\ts{(a \arrd b)}). ~\Prf ~(\tp{(a \arrd b)} ~f) \\
	 &\equiv_{\beta\Sigma} &\lambda (f : \El ~\ts{(a \arrd b)}). \\
	 & &\Prf ~(\fa ~\ts{a} ~(\lambda \ts{x}. ~(\tp{a} ~\ts{x}) \impd (\lambda \tp{x}. ~\tp{b} ~\ts{x} ~\tp{x} ~(f ~\ts{x} ~\tp{x})))) \\
	 &\equiv_{\beta\Sigma} &\lambda (f : \El ~\ts{(a \arrd b)}). ~\Pi (\ts{x} : \El ~\ts{a}). \\
	 & &\Prf ~((\tp{a} ~\ts{x}) \impd (\lambda \tp{x}. ~\tp{b} ~\ts{x} ~\tp{x} ~(f ~\ts{x} ~\tp{x}))) \\
	 &\equiv_{\beta\Sigma} &\lambda (f : \ts{(\El ~(a \arrd b))}). ~\Pi (\ts{x} : \El ~\ts{a})(\tp{x} : \Prf ~(\tp{a} ~\ts{x})). \\
	 & &\Prf ~(\tp{b} ~\ts{x} ~\tp{x} ~(f ~\ts{x} ~\tp{x})) \\
	 &\equiv_{\beta\Sigma} &\lambda (f : \ts{(\Pi (x : \El ~a). ~\El ~(b ~x))}). ~\Pi (\ts{x} : \ts{(\El ~a)})(\tp{x} : \tp{(\El ~a)} ~ \ts{x}). \\
	 & &\tp{(\El ~(b ~x))} ~(f ~\ts{x} ~\tp{x}) \\
	 &\equiv_{\beta\Sigma} &\tp{(\Pi (x : \El ~a). ~\El ~(b ~x))}.
\end{array} &
\end{flalign*}
The result has been checked in \dk for the four rewrite rules, see the \texttt{\#ASSERT} commands in the file \texttt{lo\_sp.dk}.
\end{proof}

\subsection{Interpretation of Theories}

The interpretation of a source theory $\sth$ in a target theory $\tth$ is given by the parameters $\ts{c}$ and $\tp{c}$, for each constant $c$ of $\Sigma$. We have provided the parameters for the constants of $\Sigma_{pre}$, but the parameters for the constants of $\Sigma_{\sth}$ remain to be given by the user.

\begin{definition}[Interpretation of theories]
\label{def_interp}
Let $\sth$ and $\tth$ be two theories with prelude encoding. We say that $\sth$ has an interpretation in $\tth$ when:
\begin{enumerate}
\item for each constant $c : A \in \Sigma_{\sth}$, we have a term $\ts{c}$ such that $\vdash \ts{c} : \ts{A}$ in $\tth$,

\item for each constant $c : A \in \Sigma_{\sth}$, we have a term $\tp{c}$ such that $\vdash \tp{c} : \tp{A} ~\ts{c}$ in $\tth$,

\item for each rewrite rule $\ell \lra r \in \Sigma_{\sth}$, we have $\ts{\ell} \equiv_{\beta\Sigma} \ts{r}$ and $\tp{\ell} \equiv_{\beta\Sigma} \tp{r}$ in $\tth$.
\end{enumerate}
\end{definition}

Remark that, in the third item, $\tp{\ell}$ and $\tp{r}$ do not contain metavariables, as we have seen that $\Type$ cannot occur in $\ell$ and $r$.

If we cannot interpret the rewrite rules of $\sth$ into conversions in $\tth$, we can nonetheless replace the rewrite rules of $\sth$ by equational axioms---that is by typed constants---and then interpret such constants in $\tth$. So as to replace user-defined rewrite rules by equational axioms~\cite{elimrule}, we add an equality in our signature, and we use functional extensionality, uniqueness of identity proofs, and the congruence of equality on applications.

The \lpcm features substitutions in the type of an application---in the case of dependent types---and features user-defined rewrite rules. So that the translation of a provable judgment remains provable, it is important to maintain substitution and conversion through the translations $t \mapsto \ts{t}$ and $t \mapsto \tp{t}$. For each variable $z$ occurring in a term $t$, the two variables $\ts{z}$ and $\tp{z}$ occur in the translated terms $\ts{t}$ and $\tp{t}$. The translation $\ts{(t[z \leftarrow w])}$ is thus given by $\ts{t}[\ts{z} \leftarrow \ts{w}][\tp{z} \leftarrow \tp{w}]$.

\begin{proposition}[Substitution]
\label{prop_subst}
Let $t$ and $w$ be two terms and $z$ be a variable. We have:
\begin{itemize}
\item $\ts{(t[z \leftarrow w])} = \ts{t}[\ts{z} \leftarrow \ts{w}][\tp{z} \leftarrow \tp{w}]$.
\item $\tp{(t[z \leftarrow w])} = \tp{t}[\ts{z} \leftarrow \ts{w}][\tp{z} \leftarrow \tp{w}]$.
\end{itemize}
\end{proposition}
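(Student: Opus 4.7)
The plan is to proceed by mutual structural induction on $t$, proving both equalities simultaneously. Mutual induction is unavoidable here because the two translations call each other on subterms: the clause $\ts{(\lambda (x : A). ~u)} = \lambda (\ts{x} : \ts{A})(\tp{x} : \tp{A} ~\ts{x}). ~\ts{u}$ embeds $\tp{A}$ into the binder, the clause $\tp{(t ~u)} = \tp{t} ~\ts{u} ~\tp{u}$ contains $\ts{u}$, and the product cases mix both. A single inductive pass must therefore advance the $\ts{}$-equation and the $\tp{}$-equation in lockstep.

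The base cases are essentially bookkeeping. For $t = y$ a variable, I split on whether $y = z$: if so, both $t[z \leftarrow w]$ and its translations reduce to $w$, $\ts{w}$, $\tp{w}$, using that $\tp{z}$ does not occur free in $\ts{w}$ and $\ts{z}$ does not occur free in $\tp{w}$; if $y \neq z$, the convention that each source variable $x$ is associated with two distinct fresh names $\ts{x}, \tp{x}$ ensures $\ts{y}, \tp{y} \notin \{\ts{z}, \tp{z}\}$, so both substitutions are the identity. For $t = c$, $t = \Type$, $t = \Kind$, the translations are closed with respect to $\ts{z}, \tp{z}$ so there is nothing to do. The application case is immediate by unfolding and invoking the induction hypothesis on both arguments.

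The binding cases ($\lambda$ and $\Pi$) are handled by first alpha-renaming the bound variable $x$ so that $x \neq z$, $x \notin FV(w)$, and furthermore $\ts{x}, \tp{x}$ are distinct from and do not occur free in $\ts{z}, \tp{z}, \ts{w}, \tp{w}$. Once freshness is arranged, the substitutions $[\ts{z} \leftarrow \ts{w}][\tp{z} \leftarrow \tp{w}]$ commute past the binders, and the induction hypothesis applied to $A$ and to the body closes the case. The product case splits further according to whether the codomain $B$ has type $\Type$ or $\Kind$; in the latter $\tp{(\Pi (x : A). ~B)}$ carries a metavariable $X$, but because $X$ is a purely textual placeholder that does not interact with object-level variables, the substitutions $[\ts{z} \leftarrow \ts{w}][\tp{z} \leftarrow \tp{w}]$ commute trivially with the metavariable and the same inductive argument applies.

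I expect the main obstacle to be administrative rather than mathematical: fixing once and for all a convention under which the copies $\ts{x}, \tp{x}$ associated to $x$ are genuinely fresh relative to $\ts{z}, \tp{z}$ and the free variables of $\ts{w}, \tp{w}$, and ensuring that this convention propagates correctly through the binding cases so that no inadvertent capture occurs. Once this is in place, every inductive step is a purely syntactic unfolding of the clauses defining $\ts{}$ and $\tp{}$, and the equalities follow immediately.
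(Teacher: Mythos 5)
Your proof is correct and takes essentially the same route as the paper, which disposes of this proposition in one line ("by induction on the term $t$") — your argument is simply that induction spelled out case by case. The freshness and alpha-renaming conventions you make explicit for the variable and binder cases (so that $\ts{z}, \tp{z}$ do not get captured and the two substitutions do not interfere) are exactly the bookkeeping the paper leaves implicit.
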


\begin{proof}
By induction on the term $t$.
\end{proof}

\begin{proposition}[Conversion]
\label{prop_conv}
If $A \equiv_{\beta\Sigma} B$ in $\sth$, then $\ts{A} \equiv_{\beta\Sigma} \ts{B}$ and $\tp{A} \equiv_{\beta\Sigma} \tp{B}$ in $\tth$.
\end{proposition}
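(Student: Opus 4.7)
The plan is to reduce the statement about conversion to a statement about single-step reduction and then induct on the structure of that reduction. Since $\equiv_{\beta\Sigma}$ is the reflexive–symmetric–transitive closure of $\to_{\beta\Sigma}$, and both $\ts{(\cdot)}$ and $\tp{(\cdot)}$ preserve $\equiv_{\beta\Sigma}$ under symmetry and transitivity automatically, it suffices to prove: whenever $t \to_{\beta\Sigma} u$ in $\sth$, we have $\ts{t} \equiv_{\beta\Sigma} \ts{u}$ and $\tp{t} \equiv_{\beta\Sigma} \tp{u}$ in $\tth$.

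I would proceed by induction on the derivation of $t \to_{\beta\Sigma} u$. For a head $\beta$-redex $(\lambda (x:A).\,t)\,u \to_\beta t[x \leftarrow u]$, unfolding the definition gives
\[
\ts{((\lambda(x:A).\,t)\,u)} \;=\; (\lambda(\ts{x}:\ts{A})(\tp{x}:\tp{A}\,\ts{x}).\,\ts{t})\,\ts{u}\,\tp{u},
\]
which $\beta$-reduces in two steps to $\ts{t}[\ts{x}\leftarrow \ts{u}][\tp{x}\leftarrow \tp{u}]$, and by \Cref{prop_subst} this is exactly $\ts{(t[x\leftarrow u])}$; the computation for $\tp{(\cdot)}$ is identical. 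For a head rewrite $\ell\theta \to r\theta$ with $\ell \lra r \in \Sigma_\sth \cup \Sigma_{pre}$, a routine extension of \Cref{prop_subst} to arbitrary substitutions (by iteration, noting that the variables $\ts{x}$ and $\tp{x}$ produced for distinct $x$ remain distinct) yields $\ts{(\ell\theta)} = \ts{\ell}\,\tsp{\theta}$ and $\tp{(\ell\theta)} = \tp{\ell}\,\tsp{\theta}$, where $\tsp{\theta}$ substitutes $\ts{x}\mapsto \ts{w}$ and $\tp{x}\mapsto \tp{w}$ for each binding $x\mapsto w$ in $\theta$; then $\ts{\ell}\equiv_{\beta\Sigma}\ts{r}$ and $\tp{\ell}\equiv_{\beta\Sigma}\tp{r}$ supplied by \Cref{prop_rules} (for the prelude part) and item~3 of \Cref{def_interp} (for the user part) give the desired conversions after applying $\tsp{\theta}$, using that $\equiv_{\beta\Sigma}$ is stable under substitution.

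The contextual closure cases are handled by a straightforward nested induction: the translations commute with every term-former, so if $t \to_{\beta\Sigma} u$ occurs under $\lambda (x:A).\,[-]$, under $\Pi(x:A).\,[-]$, in the body or the type of a binder, or in either position of an application, the corresponding occurrence in $\ts{t}$ and $\tp{t}$ is still well-placed to be rewritten to produce $\ts{u}$ and $\tp{u}$ up to $\equiv_{\beta\Sigma}$. The main point requiring care---and the only real obstacle---is ensuring the substitution lemma extends cleanly to the parallel substitution used in rewriting: one must check that the mapping $\theta \mapsto \tsp{\theta}$ is well-defined (no clash between the fresh $\ts{x}, \tp{x}$ variables introduced for different variables of $\theta$), and that $\ts{\ell}$ contains no free variables outside the image of this mapping, which follows because the head symbol of $\ell$ is a constant and hence all free variables of $\ell$ come from the pattern-matching substitution.
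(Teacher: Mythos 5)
Your proposal is correct and follows essentially the same route as the paper: induction on the formation of $\equiv_{\beta\Sigma}$ (equivalently, reduction to single steps), handling the $\beta$-redex case via \Cref{prop_subst}, the rewrite-rule case via $\tsp{\theta}$ together with \Cref{prop_rules} and item~3 of \Cref{def_interp}, and contextual closure by commutation of the translations with the term-formers. Your explicit remark that the single-variable substitution lemma must be extended to the parallel substitutions used in rewriting is a point the paper passes over silently, and it is handled correctly here.
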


\begin{proof}
We prove the result by induction on the formation of $A \equiv_{\beta\Sigma} B$.
\begin{itemize}
\item We have $\ts{(\lambda (x : A). ~t) ~ u)} = (\lambda (\ts{x} : \ts{A})(\tp{x} : \tp{A} ~ \ts{x}). ~\ts{t}) ~\ts{u} ~\tp{u}$, which $\beta$-reduces to $\ts{t}[\ts{x} \leftarrow \ts{u}][\tp{x} \leftarrow \tp{u}]$, that is $\ts{(t[x \leftarrow u])}$ following \Cref{prop_subst}. Similarly, $\tp{((\lambda (x : A). ~t) ~ u)} \equiv_{\beta\Sigma} \tp{(t[x \leftarrow u])}$.

\item For each $\ell \lra r \in \Sigma$ and any substitution $\theta$, we have $\ts{\ell} \equiv_{\beta\Sigma} \ts{r}$ by definition and \Cref{prop_rules}. Using \Cref{prop_subst}, we have $\ts{(\ell\theta)} = \ts{\ell}\tsp{\theta}$ and $\ts{(r\theta)} = \ts{r}\tsp{\theta}$, where $\tsp{\theta}$ is defined so that if $\theta$ substitutes $z$ by $w$, then $\tsp{\theta}$ substitutes $\ts{z}$ by $\ts{w}$ and $\tp{z}$ by $\tp{w}$. Therefore $\ts{(\ell\theta)} = \ts{\ell}\tsp{\theta} \equiv_{\beta\Sigma} \ts{r}\tsp{\theta} = \ts{(r\theta)}$. Similarly, we have $\tp{(\ell\theta)} = \tp{\ell}\tsp{\theta} \equiv_{\beta\Sigma} \tp{r}\tsp{\theta} = \tp{(r\theta)}$. 

\item For closure by context, we only show the $\lambda$-abstraction case. Suppose that $\lambda (x : A). ~t \equiv_{\beta\Sigma} \lambda (x : B). ~u$ derives from $A \equiv_{\beta\Sigma} B$ and $t \equiv_{\beta\Sigma} u$. By induction, we have $\ts{A} \equiv_{\beta\Sigma} \ts{B}$, and $\tp{A} \equiv_{\beta\Sigma} \tp{B}$, and $\ts{t} \equiv_{\beta\Sigma} \ts{u}$, and $\tp{t} \equiv_{\beta\Sigma} \tp{u}$. We derive that $\lambda (\ts{x} : \ts{A})(\tp{x} : \tp{A} ~\ts{x}). ~\ts{t} \equiv_{\beta\Sigma} \lambda (\ts{x} : \ts{B})(\tp{x} : \tp{B} ~ \ts{x}). ~\ts{u}$, that is $\ts{(\lambda (x : A). ~t)} \equiv_{\beta\Sigma} \ts{(\lambda (x : B). ~u)}$. Similarly, $\tp{(\lambda (x : A). ~t)} \equiv_{\beta\Sigma} \tp{(\lambda (x : B). ~u)}$.
\item Reflexivity, symmetry and transitivity are immediate.
\end{itemize}
\end{proof}

We have at hand all the tools allowing us to prove that, when $\sth$ has an interpretation in $\tth$, any provable judgment in $\sth$ is interpreted as a provable judgment in $\tth$. The first item of the theorem concerns well-formedness judgments. The second item concerns typing judgments with respect to the translation $t \mapsto \ts{t}$, and the third item concerns typing judgments with respect to the translation $t \mapsto \tp{t}$.

\begin{theorem}[Interpretation]
\label{thm_interp}
Let $\sth$ and $\tth$ be two theories with prelude encoding, such that $\sth$ has an interpretation in $\tth$.
\begin{enumerate}
\item If $\vdash \Gamma$ in $\sth$, then $\vdash \tsp{\Gamma}$ in $\tth$.

\item If $\Gamma \vdash t : A$ in $\sth$ then $\tsp{\Gamma} \vdash \ts{t} : \ts{A}$ in $\tth$.

\item 
\begin{enumerate}
\item If $\Gamma \vdash t : A$ and $\Gamma \vdash A : \Type$ in $\sth$, then $\tsp{\Gamma} \vdash \tp{t} : \tp{A} ~\ts{t}$ in $\tth$.

\item If $\Gamma \vdash t : A$ and $\Gamma \vdash A : \Kind$ in $\sth$, then $\tsp{\Gamma} \vdash \tp{t} : \tp{A}\{\ts{t}\}$ in $\tth$.

\item If $\Gamma \vdash A : \Kind$ in $\sth$, then for every $t$ such that $\tsp{\Gamma} \vdash t : \ts{A}$ in $\tth$, we have $\tsp{\Gamma} \vdash \tp{A}\{t\} : \Kind$.
\end{enumerate}
\end{enumerate}
\end{theorem}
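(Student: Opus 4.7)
The plan is to proceed by simultaneous structural induction on the derivation of the judgment, handling items (1), (2), and all three sub-cases of (3) at once. The three clauses of (3) must be bundled into a single induction because the shape of $\tp{A}$ depends on whether $A$ has type $\Type$ or $\Kind$: in the first case $\tp{A}$ is an ordinary term of type $\ts{A} \ra \Type$ that can be applied to $\ts{t}$, whereas in the second case $\tp{A}$ carries a metavariable to be filled by $\ts{t}$. Clause (3)(c) is included so that, when $A : \Kind$, one can still assert that $\tp{A}\{t\}$ is a well-formed kind for any $\tth$-term $t$ of type $\ts{A}$; this is precisely what is needed to type-check the codomain of binders in the $\Kind$ case of [Prod].

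First, I would dispatch the base cases. [Empty] is immediate since $\tsp{\langle \rangle} = \langle \rangle$. [Sort] reduces to $\ts{\Type} = \Type$. [Var] is handled by the definition of $\tsp{\Gamma}$, which places both $\ts{x} : \ts{A}$ and $\tp{x} : \tp{A} ~\ts{x}$ in the translated context. [Const] follows from \Cref{def_interp} combined with the proposition already established for the parameters of $\Sigma_{pre}$. For [Decl], induction (3) applied to $A$ (of type $\Type$, itself of kind $\Kind$) yields $\tsp{\Gamma} \vdash \tp{A} : \ts{A} \ra \Type$, so that $\tp{A} ~\ts{x}$ can be typed in the extended context and two applications of [Decl] close the case. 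The [Abs] case reduces to two applications of [Abs] on the unfolded $\ts{(\lambda(x:A).t)}$, using the inductive hypotheses on domain and body. The [App] case uses \Cref{prop_subst} to rewrite $\ts{(B[x \leftarrow u])}$ as $\ts{B}[\ts{x} \leftarrow \ts{u}][\tp{x} \leftarrow \tp{u}]$, which matches the type produced after two [App] steps on $\ts{t}, \ts{u}, \tp{u}$; the witness $\tp{u}$ is supplied by induction (3)(a) on $u$. Finally, [Conv] is discharged by \Cref{prop_conv}.

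The main obstacle is the interplay between the [Prod] rule and clause (3)(c). Proving (2) for $\Pi(x : A). B$ with $B : \Kind$ requires (3)(c) applied to the codomain to check that the translated $\Pi$-type is itself a kind; symmetrically, proving (3)(c) for $\Pi(x : A'). B$ with $B : \Kind$ requires expanding $\tp{(\Pi(x:A').B)}\{t\}$ to $\Pi(\ts{x} : \ts{A'})(\tp{x} : \tp{A'} ~\ts{x}). \tp{B}\{t ~\ts{x} ~\tp{x}\}$ and invoking (3)(c) recursively on $B$, the witness $t ~\ts{x} ~\tp{x}$ being typable thanks to the $\Pi$-shape of $\ts{A}$. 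This cross-dependency is precisely why (3)(c) must belong to the mutual induction, and getting the three clauses of item (3) to interlock cleanly with item (2) across the [Prod] and [Abs] cases is the delicate point that shapes the entire proof.
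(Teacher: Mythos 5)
Your proposal takes essentially the same route as the paper's proof: a mutual induction on the typing derivation covering all five clauses at once, with [Const] discharged by \Cref{def_interp} together with the proposition for the prelude parameters, [App] by \Cref{prop_subst}, [Conv] by \Cref{prop_conv}, and clause (3)(c) interlocking with the $\Kind$ sub-case of [Prod] exactly as you describe. Two minor adjustments: item (2) for $\Pi (x : A). ~B$ with $B : \Kind$ does not need (3)(c) but only the item-(2) hypothesis on $B$ together with (3)(b) on the domain $A$ (to type $\tp{A} ~\ts{x}$ in the extended context), and in the [Abs] case for item (3) a final [Conv] step is required to identify $\Pi (\ts{x} : \ts{A})(\tp{x} : \tp{A} ~\ts{x}). ~\tp{B} ~\ts{t}$ with $\tp{(\Pi (x : A). ~B)} ~\ts{(\lambda (x : A). ~t)}$.
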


\begin{proof}
We proceed by induction on the derivation. We only show the most interesting cases.
\begin{itemize}
\item \underline{\textsc{Const:}} By induction, we have $\vdash \tsp{\Gamma}$ and $\vdash \ts{A} : \ts{s}$. Since $c : A \in \Sigma$, we have $\vdash \ts{c} : \ts{A}$. We derive $\tsp{\Gamma} \vdash \ts{c} : \ts{A}$ by weakening. If $s = \Type$, then $\vdash \tp{c} : \tp{A} ~\ts{c}$ and we derive $\tsp{\Gamma} \vdash \tp{c} : \tp{A} ~\ts{c}$ by weakening. If $s = \Kind$, then $\vdash \tp{c} : \tp{A}\{\ts{c}\}$ and we derive $\tsp{\Gamma} \vdash \tp{c} : \tp{A}\{\ts{c}\}$ by weakening.

\item \underline{\textsc{Prod:}} By induction, we have $\tsp{\Gamma} \vdash \ts{A} : \Type$, and $\tsp{\Gamma} \vdash \tp{A} : \ts{A} \ra \Type$, and $\tsp{\Gamma}, \ts{x} : \ts{A}, \tp{x} : \tp{A} ~\ts{x} \vdash \ts{B} : \ts{s}$. Using \textsc{Prod}, we get $\tsp{\Gamma} \vdash \Pi (\ts{x} : \ts{A})(\tp{x} : \tp{A} ~\ts{x}). ~\ts{B} : \ts{s}$. 

Suppose that $s = \Type$. By induction, $\tsp{\Gamma}, \ts{x} : \ts{A}, \tp{x} : \tp{A} ~\ts{x} \vdash \tp{B} : \ts{B} \ra \Type$. By weakening, we have $\tsp{\Gamma}, f : \ts{(\Pi (x : A). ~B)}, \ts{x} : \ts{A}, \tp{x} : \tp{A} ~\ts{x} \vdash \tp{B} : \ts{B} \ra \Type$. Since $\tsp{\Gamma}, f : \ts{(\Pi (x : A). ~B)}, \ts{x} : \ts{A}, \tp{x} : \tp{A} ~\ts{x} \vdash \tp{B} ~(f ~\ts{x} ~\tp{x}) : \Type$, we derive $\tsp{\Gamma} \vdash \lambda (f : \ts{(\Pi (x : A). ~B)}). ~\Pi (\ts{x} : \ts{A})(\tp{x} : \tp{A} ~\ts{x}). ~\tp{B} ~(f ~\ts{x} ~\tp{x}) : \ts{(\Pi (x : A). ~B)} \ra \Type$, which corresponds to $\tsp{\Gamma} \vdash \tp{(\Pi (x : A). ~B)} : \tp{\Type}\{\ts{(\Pi (x : A). ~B)}\}$.

Suppose that $s = \Kind$ and that we have $\tsp{\Gamma} \vdash t : \ts{(\Pi (x : A). ~B)}$. Since $\tsp{\Gamma}, \ts{x} : \ts{A}, \tp{x} : \tp{A} ~\ts{x} \vdash t ~\ts{x} ~\tp{x} : \ts{B}$, by induction we get $\tsp{\Gamma}, \ts{x} : \ts{A}, \tp{x} : \tp{A} ~\ts{x} \vdash \tp{B}\{t ~\ts{x} ~\tp{x}\} : \Kind$. We derive $\tsp{\Gamma} \vdash \Pi (\ts{x} : \ts{A})(\tp{x} : \tp{A} ~\ts{x}). ~\tp{B}\{t ~\ts{x} ~\tp{x}\} : \Kind$, that is $\tsp{\Gamma} \vdash \tp{(\Pi (x : A). ~B)}\{t\} : \Kind$.

\item \underline{\textsc{Abs:}} By induction, we have $\tsp{\Gamma} \vdash \ts{A} : \Type$, and $\tsp{\Gamma} \vdash \tp{A} : \ts{A} \ra \Type$, and $\tsp{\Gamma}, \ts{x} : \ts{A}, \tp{x} : \tp{A} ~\ts{x} \vdash \ts{B} : \ts{s}$, and $\tsp{\Gamma}, \ts{x} : \ts{A}, \tp{x} : \tp{A} ~\ts{x} \vdash \ts{t} : \ts{B}$, . We derive $\tsp{\Gamma} \vdash \lambda (\ts{x} : \ts{A})(\tp{x} : \tp{A} ~\ts{x}). ~\ts{t} : \Pi (\ts{x} : \ts{A})(\tp{x} : \tp{A} ~\ts{x}). ~\ts{B}$, that is $\tsp{\Gamma} \vdash \ts{(\lambda (x : A). ~t)} : \ts{(\Pi (x : A). ~B)}$.

Suppose that $s = \Type$. By induction, we have $\tsp{\Gamma}, \ts{x} : \ts{A}, \tp{x} : \tp{A} ~\ts{x} \vdash \tp{B} : \ts{B} \ra \Type$ and $\tsp{\Gamma}, \ts{x} : \ts{A}, \tp{x} : \tp{A} ~\ts{x} \vdash \tp{t} : \tp{B} ~\ts{t}$. We derive $\tsp{\Gamma} \vdash \lambda (\ts{x} : \ts{A})(\tp{x} : \tp{A} ~\ts{x}). ~\tp{t} : \Pi (\ts{x} : \ts{A})(\tp{x} : \tp{A} ~\ts{x}). ~\tp{B} ~\ts{t}$. Using \textsc{Conv}, we conclude that $\tsp{\Gamma} \vdash \tp{(\lambda (x : A). ~t)} : \tp{(\Pi (x : A). ~B)} ~\ts{(\lambda (x : A). ~t)}$.

Suppose that $s = \Kind$. By induction, we have $\tsp{\Gamma}, \ts{x} : \ts{A}, \tp{x} : \tp{A} ~\ts{x} \vdash \tp{B}\{\ts{t}\} : \Kind$ and $\tsp{\Gamma}, \ts{x} : \ts{A}, \tp{x} : \tp{A} ~\ts{x} \vdash \tp{t} : \tp{B}\{\ts{t}\}$. We derive $\tsp{\Gamma} \vdash \lambda (\ts{x} : \ts{A})(\tp{x} : (\tp{A} ~\ts{x})). ~\tp{t} : \Pi (\ts{x} : \ts{A})(\tp{x} : \tp{A} ~\ts{x}). ~\tp{B}\{\ts{t}\}$, that is $\tsp{\Gamma} \vdash \tp{(\lambda (x : A). ~t)} : \tp{(\Pi (x : A). ~B)}\{\ts{(\lambda (x : A). ~t)}\}$ using \textsc{Conv}.

\item \underline{\textsc{App:}} By induction, we have $\tsp{\Gamma} \vdash \ts{t} : \Pi (\ts{x} : \ts{A})(\tp{x} : \tp{A} ~\ts{x}). ~\ts{B}$, and $\tsp{\Gamma} \vdash \ts{u} : \ts{A}$, and $\tsp{\Gamma} \vdash \tp{u} : \tp{A} ~\ts{u}$. We derive $\tsp{\Gamma} \vdash \ts{t} ~\ts{u} ~\tp{u} : \ts{B}[\ts{x} \leftarrow \ts{u}][\tp{x} \leftarrow \tp{u}]$. Using \Cref{prop_subst}, we conclude that $\tsp{\Gamma} \vdash \ts{(t ~u)} : \ts{(B[x \leftarrow u])}$.

Suppose that $\Gamma \vdash \Pi (x : A). ~B : \Type$ (and thus $\Gamma \vdash B : \Type$). By induction, we have $\tsp{\Gamma} \vdash \tp{t} : \Pi (\ts{x} : \ts{A})(\tp{x} : \tp{A} ~\ts{x}). ~\tp{B} ~(\ts{t} ~\ts{x} ~\tp{x})$. It follows that $\tsp{\Gamma} \vdash \tp{t} ~\ts{u} ~\tp{u} : \tp{B}[\ts{x} \leftarrow \ts{u}][\tp{x} \leftarrow \tp{u}] ~(\ts{t} ~\ts{u} ~\tp{u})$. Using \Cref{prop_subst}, we conclude that $\tsp{\Gamma} \vdash \tp{(t ~u)} : \tp{(B[x \leftarrow u])} ~\ts{(t ~u)}$.

Suppose that $\Gamma \vdash \Pi (x : A). ~B : \Kind$ (and thus $\Gamma \vdash B : \Kind$). By induction, we have $\tsp{\Gamma} \vdash \tp{t} : \Pi (\ts{x} : \ts{A})(\tp{x} : \tp{A} ~\ts{x}). ~\tp{B}\{\ts{t} ~\ts{x} ~\tp{x}\}$. It follows that $\tsp{\Gamma} \vdash \tp{t} ~\ts{u} ~\tp{u} : (\tp{B}\{\ts{t} ~\ts{x} ~\tp{x}\})[\ts{x} \leftarrow \ts{u}][\tp{x} \leftarrow \tp{u}]$. Using \Cref{prop_subst}, we conclude that $\tsp{\Gamma} \vdash \tp{(t ~u)} : \tp{(B[x \leftarrow u])}\{\ts{(t ~u)}\}$.

\item \underline{\textsc{Conv:}} We conclude using the induction hypotheses and \Cref{prop_conv}.
\end{itemize}
\end{proof}

Given an interpretation of a source theory $\sth$ in a target theory $\tth$, the results proved in $\sth$ are automatically transported to $\tth$. The interpretation of $\sth$ in $\tth$ only requires the parameters $\ts{c}$ and $\tp{c}$ in $\tth$ for each user-defined constant $c$ of $\sth$. Once we have an interpretation of $\sth$ in $\tth$, it is possible to prove that $\sth$ is consistent provided that $\tth$ is so. In the \lpcm, we say that a theory is inconsistent when we can build a term that takes a proposition and returns one of its proofs, that is when there exists a term $t$ such that $\vdash t : \Pi (P : \El ~\o). ~\Prf ~P$.

\begin{theorem}[Relative consistency]
Let $\sth$ and $\tth$ be two theories with prelude encoding, such that $\sth$ has an interpretation in $\tth$. If $\tth$ is consistent, then $\sth$ is consistent too.
\end{theorem}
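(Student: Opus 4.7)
The plan is to argue by contraposition: assume that $\sth$ is inconsistent and derive that $\tth$ is inconsistent. So suppose there is a closed term $t$ with $\vdash_{\sth} t : \Pi (P : \El ~\o). ~\Prf ~P$. Applying the interpretation theorem (\Cref{thm_interp}, item 2) to this judgment in the empty context yields $\vdash_{\tth} \ts{t} : \ts{(\Pi (P : \El ~\o). ~\Prf ~P)}$.

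The core step is to unfold this interpreted type using the parameters fixed for the prelude encoding and show that it is convertible to a type from which $\Pi (P : \El ~\o). ~\Prf ~P$ is easily inhabited. By the definition of the translation, $\ts{(\Pi (P : \El ~\o). ~\Prf ~P)}$ equals $\Pi (\ts{P} : \ts{(\El ~\o)})(\tp{P} : \tp{(\El ~\o)} ~\ts{P}). ~\ts{(\Prf ~P)}$. Unfolding the parameters for $\El$, $\o$ and $\Prf$ and $\beta$-reducing, one gets $\ts{(\El ~\o)} \equiv_{\beta\Sigma} \El ~\o$, $\ts{(\Prf ~P)} \equiv_{\beta\Sigma} \Prf ~\ts{P}$, and $\tp{(\El ~\o)} ~\ts{P} \equiv_{\beta\Sigma} \Prf ~(\ts{P} \impd (\lambda (x : \Prf ~\ts{P}). ~\ts{P})) \equiv_{\beta\Sigma} \Pi (x : \Prf ~\ts{P}). ~\Prf ~\ts{P}$, using the rewrite rule $\Prf ~(x \impd y) \lra \Pi (z : \Prf ~x). ~\Prf ~(y ~z)$ from $\Sigma_{pre}$. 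Putting these together, the interpreted type is convertible to $\Pi (P : \El ~\o)(q : \Prf ~P \ra \Prf ~P). ~\Prf ~P$, and the \textsc{Conv} rule gives $\vdash_{\tth} \ts{t} : \Pi (P : \El ~\o)(q : \Prf ~P \ra \Prf ~P). ~\Prf ~P$.

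Now the second argument is trivial to supply: the identity $\lambda (x : \Prf ~P). ~x$ has type $\Prf ~P \ra \Prf ~P$, so the term $u \coloneqq \lambda (P : \El ~\o). ~\ts{t} ~P ~(\lambda (x : \Prf ~P). ~x)$ is well-typed in $\tth$ with $\vdash_{\tth} u : \Pi (P : \El ~\o). ~\Prf ~P$. This contradicts the assumed consistency of $\tth$.

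The main obstacle, and the only place that requires actual calculation, is step two: verifying that $\tp{(\El ~\o)} ~\ts{P}$ reduces to a type that is easy to inhabit uniformly in $\ts{P}$. The specific choice of $\tp{\o}$ made in the prelude parameters, namely $\lambda z. ~z \impd (\lambda x. ~z)$, is precisely what makes this witness obtainable (the identity proof suffices); had $\tp{\o}$ been chosen differently, producing the auxiliary argument $\tp{P}$ might require a proof of $\ts{P}$ itself, which would break the argument. Everything else follows from \Cref{thm_interp} and straightforward $\beta\Sigma$-conversions.
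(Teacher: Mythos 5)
Your proof is correct and follows essentially the same route as the paper: apply the interpretation theorem to the inhabitant of $\Pi (P : \El ~\o). ~\Prf ~P$, observe that the translated type is convertible to $\Pi (\ts{P} : \El ~\o)(\tp{P} : \Prf ~\ts{P} \ra \Prf ~\ts{P}). ~\Prf ~\ts{P}$, and discharge the extra argument with the identity function. The only difference is that you spell out the $\beta\Sigma$-conversion steps for $\ts{(\El ~\o)}$, $\tp{(\El ~\o)} ~\ts{P}$ and $\ts{(\Prf ~P)}$, which the paper leaves implicit.
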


\begin{proof}
Assume that $\sth$ is inconsistent, meaning that we have a term $\vdash t : \Pi (P : \El ~\o). ~\Prf ~P$. By applying \Cref{thm_interp}, we get $\vdash t : \Pi (\ts{P} : \El ~\o)(\tp{P} : \Prf ~\ts{P} \ra \Prf ~\ts{P}). ~\Prf ~\ts{P}$. We take the term $t' \coloneqq \lambda (\ts{P} : \El ~\o). ~t ~\ts{P} ~(\lambda (x : \Prf ~\ts{P}). ~x)$ and we have $\vdash t' : \Pi (\ts{P} : \El ~\o). ~\Prf ~\ts{P}$. It follows that $\tth$ is inconsistent. 
\end{proof}

\subsection{Examples of Interpretation}

We illustrate the interpretation with two examples. First, we detail the embedding of the theory of natural numbers into the theory of integers. This example has been implemented in \dk. Second, we give an informal presentation of the embedding of Zermelo set theory into a theory where sets are represented by graphs. These two examples exemplify the practicality and limitations of this interpretation.

\subsubsection{Natural Numbers and Integers}

We aim at interpreting the theory of natural numbers $\T_n$ in the theory of integers $\T_i$. We intuitively take $\ts{\nat} \coloneqq \int$. An integer is a non-negative natural number, so the predicate asserting that an integer is a natural number is defined by $\tp{\nat} \coloneqq \lambda z. ~z \geq_i 0_i$. The interpretation of $0_n$ is given by $\ts{0_n} \coloneqq 0_i$, and we choose $\tp{0_n} \coloneqq \ax_i^1 ~0_i$ for the proof of $\ts{0_n} \geq_i 0_i$. We take $\ts{\succ_n} \coloneqq \lambda \ts{x}. ~\lambda \tp{x}. ~\succ_i ~\ts{x}$ and $\ts{\succ_n} \coloneqq \lambda \ts{x}. ~\lambda \tp{x}. ~\ax_i^3 ~(\succ_i ~\ts{x}) ~\ts{x} ~0_i ~(\ax_i^2 ~\ts{x}) ~\tp{x}$. For the interpretation of $\geq_n$, we choose $\ts{\geq_n} \coloneqq \lambda \ts{x}. ~\lambda \tp{x}. ~\lambda \ts{y}. ~\lambda \tp{y}. ~\ts{x} \geq_i \ts{y}$. Given that $\geq_n$ returns a proposition, the parameter $\tp{\geq_n}$ must have type $\Pi \ts{x}. ~\Pi \tp{x}. ~\Pi \ts{y}. ~\Pi \tp{y}. ~\Prf ~(\ts{x} \geq_i \ts{y}) \ra \Prf ~(\ts{x} \geq_i \ts{y})$, which has an immediate inhabitant. The interpretation of $\ax_i^1$ is given by $\ts{(\ax_i^1)} \coloneqq \lambda \ts{x}. ~\lambda \tp{x}. ~\ax_i^1 ~\ts{x}$. Since $\ax_i^1$ returns a proof, and by definition of $\tp{\Prf}$, both $\ts{(\ax_i^1)}$ and $\tp{(\ax_i^1)}$ have the same type, so we can take $\tp{(\ax_i^1)} \coloneqq \ts{(\ax_i^1)}$. The parameters for $\ax_i^2$ and $\ax_i^3$ are chosen correspondingly. 

When defining the parameter $\ts{\rec_n}$, we assume $\ts{P}$ of type $\Pi (\ts{x} : \El ~\ts{\nat}). ~\Prf ~(\ts{x} \geq_i 0_i) \ra \El ~\o$. We must apply $\rec_i$ to a predicate of type $\El ~\ts{\nat} \ra \El ~\o$, which asserts that an integer $z$ is non-negative and that, given a proof $h_z$ of its non-negativity, it holds $\ts{P} ~z ~h_z$. Such a predicate can be encoded using $\fa$ and $\impd$. At some point in the proof, we want to show $\ts{P} ~z ~h_z$, but we can only derive $\ts{P} ~z ~h'_z$, where $h_z$ and $h'_z$ are two proofs of $z \geq_i 0_i$. To overcome this problem, we suppose \textit{proof irrelevance}
$$\mathsf{proof\_irr} : \Pi (p : \El ~\o) (h ~h' : \Prf ~p) (Q : \Prf p \ra \El ~\o). ~\Prf ~(Q ~h) \ra \Prf ~(Q ~h')$$ 
which states that two proofs of the same proposition are equal.

Using this interpretation of natural numbers into integers, we can derive for free the theorems of $\T_n$ in $\T_i$. For instance, we can show in $\T_n$ that $\vdash \thm : \Pi (x : \El ~\nat). ~\Prf ~(\succ_n ~x \geq_n 0_n)$, where $\thm$ is a proof that uses $\rec_n$, $\ax_n^1$, $\ax_n^2$ and $\ax_n^3$. The interpretation of $\T_n$ in $\T_i$ allows us to directly derive $\vdash \ts{\thm} : \Pi (\ts{x} : \El ~\int). ~\Prf ~(\ts{x} \geq_i 0_i) \ra \Prf ~(\succ_i ~\ts{x} \geq_i 0_i)$ in $\T_i$.

The complete interpretation of natural numbers into integers has been formalized in \dk, and is available in the file \texttt{nat\_sp.dk}.

\subsubsection{Sets and Pointed Graphs}

Sets can be represented by a more primitive notion of pointed graphs, such that this encoding satisfies Zermelo set theory~\cite{zermodulo}. Pointed graphs are directed graphs with a distinguished node---the root. In the \lpcm, pointed graphs are implemented~\cite{deduktiz} thanks to sorts $\graph$ and $\node$ of type $\Set$. The predicate $\et : \El\ \graph \ra \El\ \node \ra \El\ \node \ra \El\ \o$ is such that $\et ~a ~x ~y$ is the proposition asserting that there is an edge in pointed graph $a$ from node $y$ to node $x$. The operator $\root : \El\ \graph \ra \El\ \node$ returns the root of a pointed graph, and $\chgr : \El ~\graph \ra \El ~\node \ra \El ~\graph$ is such that $\chgr ~a ~x$ corresponds to the pointed graph $a$ in which the root is now at node $x$.

The different constructors on sets---unions, pairs, powersets and comprehension---are defined via rewrite rules using the structure of pointed graphs. At the end, every axiom of Zermelo set theory is a theorem in the theory of pointed graphs. Hence we can naturally interpret Zermelo set theory in the theory of pointed graphs. Remark that every pointed graph represents a set. It follows that the predicates asserting that an object of type $\El ~\graph$ is indeed a set are not necessary. 

The theory of pointed graphs is more computational than the usual Zermelo set theory. In particular, it satisfies a normalization theorem in deduction modulo theory~\cite{zermodulo}. Using such an interpretation, the theorems proved in Zermelo set theory can be transferred to the theory of pointed graphs.

\section{Conclusion}

In this paper, we have defined an interpretation of theories of the \lpcm with prelude encoding, given well-suited parameters for interpreting the constants of the source theory. If a source theory $\sth$ has an interpretation in a target theory $\tth$, then the theorems proved in $\sth$ come for free in $\tth$. At the end, we obtain a relative consistency result, establishing that the consistency of the theory $\tth$ entails the consistency of the theory $\sth$.

This interpretation applies when $\sth$ can be embedded into $\tth$. In particular, we allow the interpretation of a type $A$ of $\sth$ by a more general type $\ts{A}$ of $\tth$. As a consequence, we ensure that, for every term $t$ of type $A$ in $\sth$, its interpretation $\ts{t}$ of type $\ts{A}$ in $\tth$ indeed satisfies the predicate $\tp{A}$. Such an interpretation is well-suited when we embed a source theory into a more general target theory, as we have seen with natural numbers and integers. However, if the target theory encompasses exactly the source theory, then the translation introduces unnecessary predicates, as we have seen with sets and pointed graphs.

\paragraph{Practical application.} The \lpcm has been implemented in the \dk proof language and in the \lampi proof assistant. Future work would be to implement this interpretation in \dk. It would allow \textit{effective} proof transfers between different \dk theories, and would therefore strengthen the interoperability between proof assistants via \dk.

\paragraph{Theoretical application.} Dowek and Miquel~\cite{realizmod} developed a method for interpreting theories of first-order logic. They showed that this interpretation can be used to prove a relative normalization result for theories in deduction modulo theory~\cite{deduction_modulo}, that is first-order logic extended with user-defined rewrite rules. An application of this paper would be to prove a relative normalization result for the \lpcm. We would therefore be able to show that the encoding of set theory via pointed graphs in the \lpcm~\cite{deduktiz} satisfies a relative normalization result, just like this encoding in deduction modulo theory~\cite{zermodulo} does.

\section*{Acknowledgments}

The author is grateful to Valentin Blot, Gilles Dowek and Théo Winterhalter for their insightful feedback on this work, and thanks the reviewers for their relevant comments.

\nocite{*}
\bibliographystyle{eptcs}
\bibliography{biblio}
\end{document}